\title{Regular realizability problems and  context-free languages}
\author{A. Rubtsov\thanks{Supported in part by RFBR grant
    14--01--00641.}\inst{2}\inst{3}
\and
M. Vyalyi\thanks{Supported  in part
RFBR grant 14--01--93107 and the scientific school grant %
NSh4652.2012.1.}
\inst{1}\inst{2}\inst{3}}
\institute{Computing Centre of RAS\\
\and Moscow Institute of Physics and Technology\\
\and National Research University Higher School of Economics\\
\email{rubtsov99@gmail.com}\\
\email{vyalyi@gmail.com}
}
\date{}
\spnewtheorem{Def}{Definition}{\bfseries}{\upshape}
\spnewtheorem{prop}{Proposition}{\bfseries}{\itshape}
\spnewtheorem{Claim}{Claim}{\bfseries}{\itshape}
\spnewtheorem*{known}{Theorem}{\bfseries\upshape}{\itshape}
\spnewtheorem*{Lemma}{Lemma}{\bfseries\upshape}{\itshape}
\let\eps\varepsilon
\let\al\alpha
\let\leq\leqslant
\let\ph\varphi
\let\es\varnothing
\def\ZZ{\mathbb Z}
\def\Lin{\ensuremath{\mathbf{Lin}}}
\def\Qrt{\ensuremath{\mathbf{Qrt}}}
\def\L{{\mathbf{L}}}
\def\NL{\ensuremath{\mathbf{NL}}}
\def\PP{\ensuremath{\mathbf{P}}}
\def\NP{\ensuremath{\mathbf{NP}}}
\def\PSPACE{\ensuremath{\mathbf{PSPACE}}}
\def\NSPACE{\ensuremath{\mathbf{NSPACE}}}
\def\CFL{\ensuremath{\mathbf{CFL}}}
\def\A{\ensuremath{\mathcal A}}
\def\C{\ensuremath{\mathcal C}}
\def\B{\ensuremath{\mathcal B}}
\def\T{\ensuremath{\mathcal T}}
\def\reg{\mathrm{RR}}
\def\nreg{\mathrm{NRR}}
\def\leGen#1#2{\mathbin{\leq^{\mathrm{#2}}_{\mathrm{#1}}}}
\def\lelog{\leGen{log}{}}
\def\lerat{\mathbin{\leq_{\mathrm{rat}}}}
\def\ba{\bar a}
\def\bx{\bar x}
\begin{document}

\maketitle

\begin{abstract} 
We investigate regular realizability (RR) problems, which are the problems
of verifying whether the intersection of a regular language -- the input of the
problem -- and a fixed language, called a filter, is non-empty. In this paper
we focus on the case of context-free filters. The algorithmic complexity
of the  RR problem  is a very coarse measure
of the complexity of context-free languages. 
This characteristic respects the rational dominance relation. 
We show that a RR problem for a maximal filter under the rational
dominance relation is \PP-complete. 
On the other hand, we present an example of a \PP-complete RR problem for a non-maximal filter.  We show that RR problems for Greibach languages belong to the class \NL. We also discuss RR problems with context-free filters that
might have intermediate complexity. 
Possible candidates are the languages with polynomially-bounded rational indices. We show that RR problems for these filters lie in the class $\NSPACE(\log^2 n)$.
\end{abstract}

\section{Introduction}

The context-free languages form one of  the most important classes for
formal language theory. There are many ways to characterize
complexity of context-free languages. In this paper we propose a new approach to
classification of context-free languages based on the algorithmic complexity of the
corresponding regular realizability (RR) problems.

By  `regular realizability' we mean the problem
of verifying whether the intersection of a regular language -- the input of the
problem -- and a fixed language, called a filter, is non-empty. The filter $F$ is a
parameter of the problem. Depending on the representation of a regular
language, we distinguish the deterministic RR problems $\reg(F)$ and
the nondeterministic ones $\nreg(F)$, which 
correspond to the description of the regular language either by a
deterministic or by a  nondeterministic finite automaton.

The relation between algorithmic complexities of $\reg(F)$ and
$\nreg(F)$ is still unknown. For our purpose -- the
characterization of the complexity of a context-free language -- the
nondeterministic version is more suitable. One of the reasons for this choice is a rational dominance relation $\lerat$ (defined in Section \ref{sec-preliminaries}).
We show below that the dominance relation on filters $F_1 \lerat F_2$ implies the log-space reduction $\nreg(F_1) \lelog \nreg(F_2)$.
So our classification is a very coarse version of the
well-known  classification of $\CFL$ by  
the  rational dominance  relation 
(see the book~\cite{Be09} for a~detailed exposition of this topic). 

Depending on a filter $F$, the algorithmic complexity of the regular realizability
problem varies drastically. There are RR problems that are complete for 
complexity classes such as 
$\L$, $\NL$, $\PP$, $\NP$, $\PSPACE$~\cite{ALRSS09,Vya11}. 
In~\cite{Vya13} a huge range of
possible algorithmic complexities of the deterministic RR problems was presented. 
We prove below
that for context-free nonempty filters the possible complexities are in the
range between \NL-complete problems and \PP-complete problems. Examples of $\PP$-complete RR problems are provided
in Section~\ref{hard}. The filter consisting of all words provides an easy example of an \NL-complete RR problem.
In this case, the problem is exactly the reachability
problem for digraphs.  The upper bound by the class $\PP$ follows from
the reduction of an arbitrary $\nreg$-problem specified by a context-free filter to the problem of
verifying the emptiness  of a language generated by a context-free
grammar. We prove it in Section~\ref{hard}.

We will call a context-free language $L$ \emph{easy} if $\nreg(L)\in\NL$ and
\emph{hard} if $\nreg(L)$ is \PP-complete. In Section~\ref{hard} we
present an example of a non-generator of the CFLs cone, which is hard
in this sense. In Section~\ref{easy} we provide examples of easy
languages. 
They cover a rather wide class -- the so-called Greibach languages
introduced 
in~\cite{Gre}.

The exact border between hard and easy languages is unknown. Moreover,
there are candidates for an intermediate complexity of RR
problems. 
They are languages with polynomially-bounded rational indices.

The rational index was
introduced in~\cite{BCN}. Recall that \emph{rational index} $\rho_L(n)$ of a language $L$ is 
a function that returns the maximum length of the 
shortest word from the intersection of the language $L$ and a language
$L(\A)$ recognized by an automaton $\A$ with $n$ states, provided
$L(\A)\cap L\ne\es$:
\begin{equation}\label{RatIndDef}
 \rho_L(n) = \max_{\A: |Q_\A| = n,\; L(\A)\cap L\ne\es}
\min\{ |u| \,|\, u \in L(\A) \cap L \}. 
\end{equation}
The growth rate of the language's rational index is an another measure of the complexity of a language. This measure is also related to the rational dominance (see Section~\ref{index} for details).

In Section~\ref{index} we prove that the RR problem for a context-free
filter having  polynomially-bounded rational index is in the class
$\NSPACE(\log^2 n)$. Note also that there are many known CFLs having 
polynomially-bounded rational indices~\cite{PF90}. 
But the RR problems for these languages
are in \NL. It would be interesting to find more sophisticated
examples of CFLs having
polynomially-bounded rational indices.

\section{Preliminaries}\label{sec-preliminaries}

The main point of our paper is investigation of the complexity of the
$\nreg$-problem for filters from the class of context-free languages  $\CFL$. 

\begin{Def}
	The regular realizability problem $\nreg(F)$ is the problem
	of verifying non-emptiness of the intersection of the filter $F$
with a regular language $L(\A)$, where $\A$ is an NFA. 
Formally
	$$\nreg(F) = \{ \A \mid \A \text{ is an NFA and } L(\A)\cap F
	\neq \es  \}. $$ 
\end{Def}

It follows from the definition that the problem $\nreg(A^*)$ for the
filter consisting of all words under alphabet $A$ is the well-known \NL-complete problem
of digraph reachability. We will show below that $\nreg(L)\in\PP$ for
an arbitrary context-free filter~$L$. So it is suitable to use deterministic
log-space reductions in the analysis of algorithmic complexity of the RR
problems specified by  CFL filters. We denote the deterministic
log-space reduction by $\lelog$.

Let us recall some basic notions and fix notation concerning the
CFLs. For a detailed exposition see~\cite{Be09,BeBoa90}.
We will refer to the empty word as $\eps$.
Let $A_n$ and $\bar A_n$ be the $n$-letter alphabets consisting of
the letters $\{a_1, a_2,\ldots,a_n\}$ and $\{\ba_1, \ba_2,\ldots,\ba_n\}$
respectively. 
A well-known example of a context-free language,
the \emph{Dyck language} $D_n$,  
is defined  by the grammar
$$  S
\to SS \mid \eps\, \mid a_1S\ba_1 \mid \cdots \mid
a_nS\ba_n. 
$$ 

Fix alphabets $A$ and $B$. A~language $L \subseteq A^* $ is \emph{rationally dominated}
by $L^\prime \subseteq B^* $ if there exists a rational relation
$R$ such that $L=R(L^\prime)$, where $R(X) =\{u \in A^*\mid \exists v \in X\ (v,u)\in R\}.$ 
We denote rational domination as $\lerat$.
We say that languages $L$, $L'$ are \emph{rationally equivalent} if
$L\lerat L'$ and $L'\lerat L$.   

A rational relation is a graph of a multivalued mapping $\tau_R$. We
will call the mapping $\tau_R$  with a rational graph
as a rational transduction. So $L\lerat L'$ means that $L = \tau_R(L^\prime)$.
Such a transduction can be realized by a \emph{rational transducer} 
(or finite-state transducer) $T$, which is a
nondeterministic finite automaton with input and output tapes, where
$\eps$-moves are permitted. We say that $u $ belongs to $ T(v)$ if
for the input $v$ there exists a path of  computation on which $T$ 
writes the word $u$ on the output tape and halts in the accepting
state. 
Formally, a  rational transducer is defined by the 6-tuple 
$T = (A, B, Q, q_0, \delta, F)$, where 
$A$ is the input alphabet, $B$ is the output alphabet, 
$Q$ is the (finite) state set, $q_0$ is the initial
state,  $F \subseteq Q$ is the set of accepting states and 
$\delta\colon Q \times (A\cup\eps) \times
(B\cup\eps)\times Q $ is the
transition
relation. 
 
Let two rational transducers $T_1$ and $T_2$ correspond to
rational relations $R_1$ and $R_2$, respectively. We say that a rational transducer 
$T= T_1 \circ T_2$ is the composition of $T_1$ and $T_2$ if the relation $R$
corresponding to $T$ such that $R =  \{ (u,v)\mid\exists y (u,y) \in R_1,
(y,v) \in R_2 \}$.

Define the composition of transducer $T$ and automaton $\A$ in the
same way: automaton $\B = T \circ \A$ recognizes the language $\{w \,|\, \exists y \in L(\A)\ (w,y) \in R \}$.

The following proposition is an algorithmic version of the Elgot-Mezei
theorem (see, e.g., \cite[Th. 4.4]{Be09}).

\begin{prop}\label{compose}
 The composition of transducers and the composition of a transducer
   and an automaton are computable in deterministic log space.
\end{prop}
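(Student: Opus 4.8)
The plan is to write out the product (Elgot-Mezei) construction explicitly and then observe that it is local enough to be carried out by a deterministic log-space transducer. Let $T_1 = (A, B, Q_1, q_0^1, \delta_1, F_1)$ and $T_2 = (B, C, Q_2, q_0^2, \delta_2, F_2)$. I would take $T = T_1 \circ T_2$ to have state set $Q_1 \times Q_2$, initial state $(q_0^1, q_0^2)$, accepting set $F_1 \times F_2$, and transition relation consisting of the tuples: (i) $\bigl((p_1, p_2),\, a,\, \eps,\, (r_1, p_2)\bigr)$ for every $(p_1, a, \eps, r_1) \in \delta_1$ (the first machine moves on the input letter $a \in A \cup \eps$, the second waits); (ii) $\bigl((p_1, p_2),\, \eps,\, c,\, (p_1, r_2)\bigr)$ for every $(p_2, \eps, c, r_2) \in \delta_2$ (the second machine moves, writing $c \in C \cup \eps$, the first waits); and (iii) $\bigl((p_1, p_2),\, a,\, c,\, (r_1, r_2)\bigr)$ whenever $(p_1, a, b, r_1) \in \delta_1$ and $(p_2, b, c, r_2) \in \delta_2$ share a letter $b \in B$ (this letter is passed from the first machine to the second in one synchronized step). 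The composition $\B = T \circ \A$ of a transducer $T = (A, B, Q_T, q_0^T, \delta_T, F_T)$ with an NFA $\A$ over $B$ is the special case of the same construction in which $\A$ is regarded as a transducer over input alphabet $B$ with empty output alphabet (every transition writing $\eps$); erasing the empty output tape turns $T \circ \A$ into an NFA with state set $Q_T \times Q_\A$.

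Next I would verify that $T$ realizes $R_1 \circ R_2$, that is, $R_T = \{(u, v) \mid \exists y\ (u,y) \in R_1,\ (y,v) \in R_2\}$. Projecting an accepting run of $T$ onto its first coordinate (treating rule (ii) steps as ``do nothing'') yields an accepting run of $T_1$ on $u$ whose output is the word $y \in B^*$ formed by the letters passed at the rule (iii) steps; projecting onto the second coordinate (treating rule (i) steps as ``do nothing'') yields an accepting run of $T_2$ on that same $y$ with output $v$. Conversely, from accepting runs of $T_1$ on $u$ with output $y$ and of $T_2$ on $y$ with output $v$, one shuffles the two into a run of $T$: each step transferring a letter of $y$ is realized by rule (iii), and the remaining $\eps$-output steps of $T_1$ and $\eps$-input steps of $T_2$ are inserted by rules (i) and (ii) in any order (they commute, since they do not touch the shared tape). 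The same reasoning gives $L(\B) = \{w \mid \exists y \in L(\A)\ (w,y) \in R_T\}$. This is just the usual proof of the Elgot-Mezei theorem; the only point that needs care is the treatment of $\eps$-moves, since rules (i) and (ii) must be included so that $T_2$ can interleave $\eps$-input moves between consecutive letters of $y$ and $T_1$ can interleave $\eps$-output moves.

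Finally, the complexity bound. The description of $T$ has size polynomial in $|T_1| + |T_2|$: its state set has $|Q_1|\cdot|Q_2|$ elements and its transition list has at most $|\delta_1| + |\delta_2| + |\delta_1|\cdot|\delta_2|$ tuples. A deterministic machine that outputs this description needs only: two nested counters ranging over the state lists of $T_1$ and $T_2$ to print $Q_1 \times Q_2$, the pair $(q_0^1, q_0^2)$, and $F_1 \times F_2$; a single scan of $\delta_1$ to print the rule (i) tuples; a single scan of $\delta_2$ to print the rule (ii) tuples; and a double loop over $\delta_1$ and $\delta_2$ to print a rule (iii) tuple whenever the two intermediate letters coincide. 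Every pointer used is an index into the read-only input, hence occupies $O(\log(|T_1| + |T_2|))$ bits, and no other working storage is needed; the same estimate holds verbatim for $T \circ \A$. Hence both compositions are computable in deterministic log space. I expect no genuine obstacle here: the construction is entirely combinatorial and local, and the only care required is the bookkeeping of $\eps$-moves already mentioned.
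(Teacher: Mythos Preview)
Your proposal is correct and is exactly the standard Elgot--Mezei product construction, with the log-space analysis spelled out. The paper does not actually give a proof of this proposition: it states it as an ``algorithmic version of the Elgot--Mezei theorem'' with a reference to Berstel's book and moves on. So there is nothing to compare against; your write-up supplies the omitted details and does so cleanly, including the handling of $\eps$-moves via rules (i) and (ii), which is the only place one could slip.
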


A \emph{rational cone} is a class of languages closed under rational dominance.
Let $\T(L)$ denote the least rational cone that includes language $L$
and call it the \emph{rational cone generated by $L$}. 
Such a cone is called \emph{principal\/}.
For example, the cone $\Lin$ of linear languages (see~\cite{Be09} for
definition) is principal: $\Lin =\T(S)$, where  the \emph{symmetric
language} $S$ over the 
alphabet $X = \{x_1,x_2,\bx_1,\bx_2\}$ is defined by the grammar
\begin{equation*}
     S\to x_1 S\bx_1 \mid x_2 S\bx_2 \mid \eps.
\end{equation*}

For a mapping $a\mapsto L_a$ the \emph{substitution} $\sigma$ is the morphism
from $A^*$ to the power set $2^{B^*}$ such that $\sigma(a) = L_a$. The
image $\sigma(L)$ of a language $L\subseteq A^*$ is defined in the natural
way. The \emph{substitution closure} of a class of languages $\mathcal
L$ is the least class containing all substitutions of languages from
$\mathcal L$ to the languages from~$\mathcal L$. We need two
well-known examples of the substitution closure. The class
$\Qrt$ of the \emph{quasirational languages} is the substitution closure of
the class $\Lin$. The class of \emph{Greibach languages}~\cite{Gre} is the
substitution closure of the rational cone generated by the Dyck language $D_1$ and the symmetric
language~$S$.

It is important for our purposes that rational dominance implies a
reduction for the corresponding RR problems.

\begin{lemma}\label{transductions-vs-logspace-reductions}
	If $F_1 \lerat F_2$ then $\nreg(F_1) \lelog \nreg(F_2) $.
\end{lemma}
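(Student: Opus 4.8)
The plan is to reduce $\nreg(F_1)$ to $\nreg(F_2)$ using the rational transducer witnessing $F_1 \lerat F_2$. Suppose $F_1 = \tau_R(F_2)$ for a rational transduction $\tau_R$ realized by a transducer $T$. Given an instance of $\nreg(F_1)$, that is, an NFA $\A$ over the input alphabet of $F_1$, I want to produce in deterministic log space an NFA $\B$ such that $L(\B) \cap F_2 \neq \es$ if and only if $L(\A) \cap F_1 \neq \es$.

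The key observation is the following chain of equivalences. We have $L(\A) \cap F_1 \neq \es$ iff there is a word $w \in L(\A)$ with $w \in \tau_R(F_2)$, i.e. iff there exist $w \in L(\A)$ and $y \in F_2$ with $(y,w) \in R$. Now reverse the roles: consider the transducer $T'$ that computes the inverse relation $R^{-1} = \{(w,y) \mid (y,w)\in R\}$; since a transducer is a symmetric object in its two tapes, $T'$ is obtained from $T$ simply by swapping the input and output components of every transition, which is clearly a log-space (indeed essentially trivial) transformation. Then the composition $\B \bydef T' \circ \A$ is an NFA (over the alphabet of $F_2$) recognizing exactly $\{ y \mid \exists w \in L(\A)\ (w,y)\in R^{-1}\} = \{ y \mid \exists w \in L(\A)\ (y,w)\in R\}$. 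Therefore $L(\B) \cap F_2 \neq \es$ iff there exist $y \in F_2$, $w \in L(\A)$ with $(y,w)\in R$, which is precisely $L(\A)\cap F_1 \neq \es$. So $\A \mapsto \B$ is the desired reduction.

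It remains to check that the map $\A \mapsto \B$ is computable in deterministic log space. The transducer $T$ (hence $T'$) is a fixed object depending only on the pair $F_1, F_2$, not on the input $\A$; constructing $T'$ from $T$ is a fixed finite operation. The composition $T' \circ \A$ is computable in deterministic log space by Proposition~\ref{compose}. Composing a log-space computable map (producing $T'$, or rather just invoking the fixed $T'$) with the log-space computation of the composition gives again a log-space computation, since deterministic log-space reductions are closed under composition. Hence $\nreg(F_1) \lelog \nreg(F_2)$.

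The only point requiring a little care is the correctness of the "swap the tapes" step, i.e. that the defining property of a rational transducer is genuinely symmetric in the two tapes so that $T'$ really realizes $R^{-1}$; this follows immediately from the formal definition of $\delta \subseteq Q \times (A\cup\eps)\times(B\cup\eps)\times Q$ given above, where exchanging the two middle coordinates transforms a transducer for $R$ into one for $R^{-1}$. Given that, the main (and essentially only) substantive ingredient is Proposition~\ref{compose}; the rest is bookkeeping about log-space reductions being transitive.
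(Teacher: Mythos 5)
Your proposal is correct and takes essentially the same route as the paper: use the fixed transducer $T$ witnessing $F_1=T(F_2)$ and Proposition~\ref{compose} to build in deterministic log space an NFA $\B$ from the input $\A$ with $L(\B)\cap F_2\ne\es$ iff $L(\A)\cap F_1\ne\es$. The only cosmetic difference is that the paper's definition of the composition $T\circ\A$ already places $L(\A)$ on the output side of the relation (so $\B=T\circ\A$ directly recognizes the set of words mapped by $T$ into $L(\A)$), which makes your explicit tape-swapping step producing $T'$ for $R^{-1}$ unnecessary, though entirely harmless.
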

\begin{proof}
	Let $T$ be a rational transducer such that $F_1 = T(F_2)$
	and let $\A$ be an input of the $\nreg(F_1)$ problem. Construct the
	automaton $\B = T \circ \A $ and use it as an input of the
	$\nreg(F_2)$ problem. 
	It gives the log-space reduction due to
	Proposition~\ref{compose}.  
\end{proof}

In particular, this lemma implies that  if a problem $\nreg(F)$ is
complete in a complexity class $\C$, then for any filter $F^\prime$ from the
rational cone $\T(F)$  the problem $\nreg(F^\prime)$  is in the class~$\C$. 

We will use the following reformulation of the Chomsky-Sch\"utzenberger theorem.

\begin{known}[Chomsky, Sch\"utzenberger]
	$\CFL = \T(D_2)$.
\end{known}
In the next section, we prove  that $\nreg(D_2)$ is  $\PP$-complete
under deterministic log-space reductions. Thus, it 
follows from the Chomsky-Sch\"utzenberger theorem and
Lemma~\ref{transductions-vs-logspace-reductions} that any problem $\nreg(F)$
for a CFL filter $F$ lies in the class $\PP$.

\section{Hard  RR problems with CFL filters}\label{hard}

In this section we present examples of hard context-free
languages. The first example is the Dyck language
$D_2$. 

By use of
Lemma~\ref{transductions-vs-logspace-reductions} and
the Chomsky-Sch\"utzenberger theorem, we conclude that
any generator of the CFL cone is hard. But there are additional hard
languages. We provide such an example, too.

We start with some technical lemmas. The intersection of a CFL and a
rational language is a CFL. 
We need an algorithmic version of this fact.

\begin{lemma}\label{ATrans}
	Let $G = (N, \Sigma, P, S)$ be a fixed context-free grammar.
	Then there exists a deterministic log-space algorithm that 
	takes a description of an NFA $\A
	= (Q_\A, \Sigma, \delta_\A, q_0, F_\A)$  and
	constructs a
	grammar $G^\prime = 
	 (N^\prime, \Sigma, P^\prime, S^\prime) $ generating the language
	 $L(G)\cap L(\A)$. The grammar size is  polynomial  in
	 $|Q_\A|$. 
\end{lemma}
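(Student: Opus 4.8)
The plan is to adapt the standard triple construction (the "product construction" in the Bar-Hillel–Shamir–Perles proof that $\mathrm{CFL}$ is closed under intersection with regular languages) and verify that it can be carried out in deterministic log space, with a grammar of polynomial size. Since $G$ is \emph{fixed}, its set of nonterminals $N$, its rules $P$, and in particular the maximum length $k$ of a right-hand side of a rule are all constants; only $\A$ is part of the input.

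\emph{The construction.} First I would put $G$ into a convenient normal form — say, each rule has a right-hand side that is a single terminal, a single nonterminal, or a string of at most two nonterminals (Chomsky-like normal form, but we may keep $\eps$-rules and unit rules since $G$ is fixed and this is a one-time preprocessing independent of the input). This normalization is a constant-size transformation done once, off-line. Now the new nonterminals are triples $[p, X, q]$ with $X \in N$ and $p, q \in Q_\A$, and the start symbol is a fresh $S'$ with rules $S' \to [q_0, S, f]$ for every $f \in F_\A$. For a rule $X \to a$ of $G$ with $a \in \Sigma$, we add $[p, X, q] \to a$ whenever $q \in \delta_\A(p, a)$; for $X \to a$ an $\eps$-rule we add $[p, X, p] \to \eps$; for a unit rule $X \to Y$ we add $[p, X, q] \to [p, Y, q]$; and for a binary rule $X \to YZ$ we add $[p, X, q] \to [p, Y, r]\,[r, Z, q]$ for all $p, r, q \in Q_\A$. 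We should also handle $\eps$-transitions of $\A$ by adding, say, $[p, X, q] \to [p, X, r]$ when $r \in \delta_\A(q, \eps)$ (or, more cleanly, first compute the $\eps$-closure of $\delta_\A$ in log space and then assume $\A$ is $\eps$-free). The correctness claim $[p, X, q] \Rightarrow^* w$ iff $X \Rightarrow_G^* w$ and $\A$ has a path from $p$ to $q$ labelled $w$ is the classical induction on derivation length; I would only sketch it.

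\emph{The resource bounds.} The number of nonterminals is $|N| \cdot |Q_\A|^2 + 1 = O(|Q_\A|^2)$ since $|N|$ is constant, and the number of rules is $O(|P| \cdot |Q_\A|^3) = O(|Q_\A|^3)$, so $|G'|$ is polynomial in $|Q_\A|$ as required. For the log-space bound: the output is produced rule-by-rule in a fixed order; for each candidate rule the algorithm needs only a constant number of pointers into $Q_\A$ (so $O(\log|Q_\A|)$ bits of counters), plus the ability to test membership in $\delta_\A$, which is a direct lookup in the input, and membership in $F_\A$ (another lookup). Nothing beyond these counters and a constant-size description of $G$ needs to be stored, so the work tape is logarithmic. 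Computing the $\eps$-closure, if we do that preprocessing, is transitive closure of a digraph on $Q_\A$, which is in $\mathbf{NL} \subseteq \mathbf{DSPACE}(\log^2)$; to stay in deterministic log space I would instead avoid the explicit $\eps$-closure and directly add the chain rules $[p,X,q]\to[p,X,r]$ for $r\in\delta_\A(q,\eps)$ as above, which only requires a single $\delta_\A$ lookup per emitted rule.

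\emph{The main obstacle} I anticipate is not the construction itself but the bookkeeping needed to keep everything in \emph{deterministic} log space rather than merely in $\mathbf{NL}$ or polynomial time: one must be careful that the normal-form preprocessing is genuinely off-line (independent of $\A$), that $\eps$-moves of $\A$ are handled by local rule-additions instead of a global closure computation, and that unit rules introduced by the triple construction do not blow up (they do not — there are only $O(|Q_\A|^2)$ of them per original unit rule). Once these points are dispatched, the log-space bound is immediate because the algorithm is "output a polynomial list of short rules, each determined by a constant number of log-sized indices and a constant number of input lookups."
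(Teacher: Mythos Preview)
Your approach is the paper's: both run the Bar-Hillel--Perles--Shamir triple construction and observe that, since $G$ is fixed, only a constant number of state-indices into $Q_\A$ need to be held on the work tape, giving the log-space bound and the polynomial grammar size. The only cosmetic difference is that you normalize $G$ off-line to binary rules, whereas the paper works directly with rules of arbitrary (but constant) length~$k$ and enumerates all $(k{+}1)$-tuples of states per production.

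One slip to fix: your $\eps$-move chain rule is in the wrong direction and covers only one endpoint. Take $G\colon S\to a$ and $\A$ with $q_0\xrightarrow{a}q_1\xrightarrow{\eps}f$, $F_\A=\{f\}$; then $[q_0,S,f]$ must derive $a$, but with your rule $[p,X,q]\to[p,X,r]$ for $r\in\delta_\A(q,\eps)$ it derives nothing (there is no $\eps$-edge \emph{out of} $f$). You want $[p,X,q]\to[p,X,r]$ whenever $r\xrightarrow{\eps}q$, and symmetrically $[p,X,q]\to[r,X,q]$ whenever $p\xrightarrow{\eps}r$; chains of $\eps$-moves are then absorbed by iterating these unit rules inside the derivation, and each such rule is still emitted after a single lookup into $\delta_\A$, so the log-space bound survives. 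The paper's variant instead interleaves optional $\eps$-steps between adjacent triples in the body of each production and around each terminal rule, which is a different bookkeeping choice for the same effect.
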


This fact is well-known. We provide the proof because the construction
will be used  in the proof of Theorem \ref{PolynomialRatIntdex} below.

\begin{proof}[of Lemma~\ref{ATrans}]
	First, to make the construction clearer, we assume that automaton $\A$ has no $\eps$-transitions.
	Let $N^\prime$ consist of the
	axiom $S^\prime$ and nonterminals $[qAp]$, where $A \in N$ and
	$q,p \in Q_\A$. Construct $P^\prime$ by adding for each rule
	$A \to X_1X_2\cdots X_n $ from $P$ the set of rules 
	$$\{ [qAp]\to [qX_1r_1][r_1X_2r_2]\cdots[r_{n-1}X_np]\mid
	q,p,r_1,r_2,\ldots,r_{n-1} \in Q_\A \}
	$$
	to
	$P^\prime$. Also add to $P^\prime$ rules $[q\sigma p]\to
	\sigma $ if $\delta_\A(q,\sigma) = p$ and $S^\prime \to
	[q_0Sq_f]$ for each $q_f$ from $F_\A$.
	
	Now we prove that $L(G^\prime) = L(G) \cap L(\A)$. Let $G$
	derive the word $w = w_1w_2\cdots w_n$.  
Then grammar $G^\prime$ derives all possible sentential forms
$$[q_0w_1r_1][r_1w_2r_2]\cdots [r_{n-1}w_nq_f],$$ 
where $q_f \in F_\A$
and $r_i \in Q_\A$. And $[q_0w_1r_1][r_1w_2r_2]\cdots [r_{n-1}w_nq_f]
\Rightarrow^* w_1w_2\cdots w_n $ iff there is a successful run for
the automaton $\A$ on $w$. If $G^\prime$ derives a word $w$ then each
symbol $w_i$ of the word has been derived from some nonterminal
$[qw_ip]$.  Due to the construction of the grammar $G^\prime$ the word $w$
has been derived from some sentential form
$[q_0w_1r_1][r_1w_2r_2]\cdots [r_{n-1}w_nq_f],$
which encodes a successive run of $\A$ on $w$. Thus 
$G^\prime$ derives the word $w$ only if $G$ does as well.

The size of $G^\prime$ is polynomial in $Q_\A$. The size of $N^\prime$
is $|N|\cdot|Q_\A|^2+1$. Let $k$ be the length of the longest rule in
$P$. Then for each rule from $P$ there  are at most $|Q_\A|^{k+1}$
rules in $P^\prime$ and for rules 
in the form $[q\sigma p] \to \sigma$ or
$S^\prime \to [q_0Sq_f]$ 
there are at most
$O(|Q_\A|^2)$ rules in $P^\prime$.

Finally, the grammar $G^\prime$ is log-space constructible, because 
the rules of $P^\prime$ corresponding to the particular rule from $P$ can be
generated by inspecting all $(k+1)$-tuples of states of $\A$ and $k=O(1)$.
Adding $\eps$-transitions just increases $k+1$
to $2k$. For each rule $A \to X_1\cdots X_n$ we add rules
$[qAp] \to [qX_1q_1][q_2X_2q_3]\cdots [q_{2n-1}X_np]$, where 
$q_i=q_{i+1}$ or $q_i \xrightarrow{\eps} q_{i+1}$ for all~$i$.
In the case of $[q\sigma p]\to \sigma$ 
rules we add all such rules that $q
\xrightarrow{\eps} q^\prime $, $p^\prime \xrightarrow{\eps} p$ and
$\delta(q^\prime, \sigma) = p^\prime$.
\end{proof}

Note that if grammar $G$ is in Chomsky normal form, then the number of nonterminals of the grammar $G^\prime$ is $O(|Q_\A|^2)$.  Recall that for a grammar in the Chomsky normal form, the right-hand
side of each rule consists of either two nonterminals, or one
terminal. The empty word may be produced only by the axiom and the axiom does not appear in a right-hand side  of any rule.

Also we need an algorithmic version of the Chomsky-Sch\"utzenberger theorem.

\begin{lemma}\label{CS-alg}
  There exists a deterministic log-space algorithm that takes  a
  description of a context-free grammar $G = (N,\Sigma, P, S)$ and
  produces a rational transducer $T$ such that  $T(D_2) = L(G)$.
\end{lemma}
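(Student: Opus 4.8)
The plan is to give an effective, log-space version of the classical Chomsky–Schützenberger construction. The classical theorem says that every context-free language $L(G)$ can be written as $h(D_k \cap R)$ for a morphism $h$, a Dyck language $D_k$ (on $k$ pairs of brackets, where $k$ depends on the size of $G$), and a regular language $R$ depending on $G$. Since morphisms and intersections with regular languages are special cases of rational transductions, the triple $(h, R)$ assembles into a single rational transducer $T$ with $T(D_k) = L(G)$. Two things then need care: first, reducing $D_k$ to $D_2$ (we want the fixed filter to be exactly $D_2$, not $D_k$); second, checking that the whole construction — grammar $\to$ transducer — runs in deterministic log space, with transducer size polynomial in $|G|$, so that it composes correctly with Lemma~\ref{transductions-vs-logspace-reductions}.

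First I would normalize $G$ to Chomsky normal form; this is an $O(1)$-size transformation since $G$ is fixed, so it costs nothing. Next, following the standard proof, I introduce for each production $\pi\colon A \to BC$ (resp. $A \to \sigma$) of $G$ two fresh bracket pairs $[_\pi, ]_\pi$ and $[_\pi', ]_\pi'$, obtaining a Dyck alphabet whose size is $O(|P|) = O(1)$. I define the morphism $h$ that erases all opening/closing brackets except the "content" markers that carry the terminal symbols, and I define the regular language $R$ — recognized by a small DFA, again of $O(1)$ size — which enforces that a Dyck word encodes a legal left-to-right traversal of a derivation tree of $G$ (the bracket immediately following $[_\pi$ with $\pi\colon A\to BC$ must be $[_{\rho}$ for some $\rho$ with left-hand side $B$, and so on; the axiom-starts-the-word condition is likewise a finite-state check). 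Standard arguments give $h(D_N \cap R) = L(G)$ where $N$ is the number of bracket pairs. Since all of $h$, the DFA for $R$, and the bracket alphabet are fixed finite objects determined by the fixed grammar $G$, producing a transducer $T'$ with $T'(D_N) = L(G)$ is literally a constant-size output — trivially log-space computable.

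The one genuinely non-trivial point is passing from $D_N$ to $D_2$. Here I use the well-known fact that $D_N \lerat D_2$ for every $N$: encode the $i$-th opening bracket as $a_1 a_2^i a_1$-type strings over the two-bracket alphabet (a balanced block), i.e. compose with the rational transduction realizing a prefix code of the $N$ bracket pairs into $D_2$. This transduction has size $O(N) = O(1)$, so composing it with $T'$ (using the Elgot–Mezei composition of Proposition~\ref{compose}, which is log-space and polynomial-size) yields the desired transducer $T$ with $T(D_2) = L(G)$. Because $N$ depends only on the fixed $G$, everything in sight is of bounded size and the algorithm is vacuously in log space; the substantive content is simply verifying correctness of the $h(D_N\cap R) = L(G)$ identity and of the $D_N \lerat D_2$ encoding.

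**The main obstacle** I anticipate is purely bookkeeping: getting the definition of $R$ exactly right so that a Dyck word over the bracket alphabet corresponds bijectively (after applying $h$) to terminal yields of $G$-derivations — in particular handling the CNF rules of the form $A\to\sigma$ and ensuring the two primed/unprimed bracket pairs per production are used to prevent spurious matchings across sibling subtrees. None of this is deep; it is the standard Chomsky–Schützenberger argument, and since $G$ is fixed the complexity claims are immediate once correctness is established.
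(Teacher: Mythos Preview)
The paper states this lemma without proof, so there is no in-paper argument to compare against directly. Your high-level strategy --- an effective Chomsky--Sch\"utzenberger construction followed by the rational reduction $D_N \lerat D_2$ --- is exactly the intended route.

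However, your complexity analysis rests on a misreading of the statement. You repeatedly assert that ``$G$ is fixed'' and hence that the CNF conversion, the bracket alphabet of size $N=O(|P|)$, the morphism $h$, the automaton for $R$, and the $D_N\to D_2$ encoding are all $O(1)$ objects, making the algorithm ``vacuously in log space''. But look at the lemma: the algorithm \emph{takes a description of a context-free grammar $G$} as input. This is not the setting of Lemma~\ref{ATrans}, where the grammar is a fixed parameter and only the automaton varies; here $G$ itself is the input, and indeed in the theorem immediately following the lemma, $G$ ranges over all CFGs (it is the instance of the $\PP$-complete CFG-emptiness problem being reduced). So none of your ``$O(1)$'' claims hold, and the argument as written establishes nothing about space complexity.

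To repair the proof you must actually argue that each step is log-space in $|G|$: the CNF normalization (or bypass it and work directly with general rules, which is cleaner here), the explicit construction of the local automaton for $R$, the morphism $h$, and --- the one place that needs a moment's thought --- the $D_N\lerat D_2$ transducer, whose size is polynomial in $N=O(|P|)$ and which must be written out using only $O(\log|G|)$ workspace. None of this is hard once you realize $G$ is the input, but it is the entire content of the lemma and your proposal skips it.
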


Now we are ready to prove hardness of the Dyck language $D_2$.

\begin{theorem}
 The problem $\nreg(D_2)$ is $\PP$-complete.
\end{theorem}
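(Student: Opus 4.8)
The plan is to prove the membership $\nreg(D_2)\in\PP$ and the $\PP$-hardness separately, the former resting on Lemma~\ref{ATrans} and the latter on Lemma~\ref{CS-alg}.

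For the membership, I would instantiate Lemma~\ref{ATrans} with the \emph{fixed} grammar of $D_2$: on input an NFA $\A$ it produces, in deterministic log space, a context-free grammar $G'$ of size polynomial in $|Q_\A|$ with $L(G')=D_2\cap L(\A)$. It then remains to decide whether $L(G')=\es$, which is done in polynomial time by the standard bottom-up marking of \emph{productive} nonterminals (mark $A$ whenever some rule $A\to\alpha$ has all of its nonterminals already marked, and accept iff the axiom eventually gets marked). Hence $\nreg(D_2)\in\PP$; this is nothing but the special case $F=D_2$ of the general $\PP$ upper bound for CFL filters that Lemma~\ref{ATrans} yields.

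For $\PP$-hardness I would reduce from the non-emptiness problem for context-free grammars --- given a CFG $G$, decide whether $L(G)\ne\es$ --- which is well known to be $\PP$-complete under deterministic log-space reductions. (This is essentially \textsc{Gen}/path systems: take the nonterminals as vertices, the rules $A\to\eps$ as seeds and the rules $A\to BC$ as ternary hyperedges; then $A$ derives a terminal word iff $A$ lies in the closure, so $L(G)\ne\es$ iff the axiom does.) Given such a $G$, apply Lemma~\ref{CS-alg} to obtain, in log space, a rational transducer $T$ with $T(D_2)=L(G)$, and let $\A$ be the NFA obtained from $T$ by erasing the output letter of every transition while keeping the input letters, the $\eps$-moves, and the initial and accepting states. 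Then $L(\A)$ is exactly the domain of $\tau_T$, that is $\{w\mid T(w)\ne\es\}$; consequently $D_2\cap L(\A)\ne\es$ iff there is $w\in D_2$ with $T(w)\ne\es$, iff $T(D_2)=\bigcup_{w\in D_2}T(w)\ne\es$, iff $L(G)\ne\es$. The maps $G\mapsto T$ (Lemma~\ref{CS-alg}), $T\mapsto\A$ (immediate), and the final emptiness query compose to a single deterministic log-space reduction, so $\nreg(D_2)$ is $\PP$-hard, and combined with the first part, $\PP$-complete.

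I expect the only content-bearing ingredients to be the $\PP$-hardness of CFG non-emptiness --- standard, but worth a citation --- and the algorithmic Chomsky-Sch\"utzenberger theorem: it is Lemma~\ref{CS-alg} that pushes all the instance-dependent information out of the filter (which stays the fixed language $D_2$) and into the regular input $\A$. Projecting a transducer onto its input tape, composing the log-space reductions, and the polynomial-time emptiness test are all routine.
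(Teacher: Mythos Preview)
Your proposal is correct and follows essentially the same approach as the paper: membership via Lemma~\ref{ATrans} plus the polynomial-time emptiness test for CFGs, and hardness via Lemma~\ref{CS-alg} by projecting the transducer onto its input tape and reducing from CFG non-emptiness. Your write-up is in fact more careful than the paper's own proof, spelling out the iff chain for hardness and the productive-nonterminal marking for membership.
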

\begin{proof}
To prove \PP-hardness 
we reduce the well-known $\PP$-complete problem of
verifying whether a context-free grammar generates an empty language~\cite{GHR}
to $\nreg(D_2)$. 
Based on a grammar $G$, construct a transducer $T$ such that 
$T(D_2) =L(G)$ 
using Lemma~\ref{CS-alg}.  Let $\A$ be a
nondeterministic automaton obtained from the transducer $T$ by
ignoring the output tape. Then $L(\A)\cap D_2$ is nonempty iff
$L(G)$ is nonempty. The mapping $G\to \A$ is the required reduction.

To prove that $\nreg(D_2)$ lies in $\PP$ we reduce this problem to
the problem of non-emptiness of a language generated by a context-free
grammar. 

For an input $\A$ construct the grammar $G$ such that  
$L(G) = L(\A)\cap D_2$ using Lemma~\ref{ATrans}. 
\end{proof}

\begin{corollary}
  Any generator of the \CFL{} cone is a hard language. 
\end{corollary}

Now we present another example of a hard language.  Boasson proved
in~\cite{Boa85} that there exists a principal rational cone of
non-generators of the CFL cone containing the family $\Qrt$ of the quasirational languages.

Below we establish \PP-completeness of the nondeterministic RR
problem for a generator of this cone. The construction follows the
exposition in~\cite{BeBoa90}.

For brevity we denote the alphabet of the Dyck language $D_1$ by  $A = \{a,\ba\}^* $.
Recall that the syntactic substitution of a language $M$ into a language $L$ is
$$ L\mathbin{\uparrow}M = \{ m_1x_1m_2x_2\cdots m_r x_r \mid m_1,\ldots, m_r \in M,\, x_1x_2\cdots x_r\in L  \}\cup (\{\eps\}\cap L). $$
We also  use the language $S_\# = S\mathbin{\uparrow} \#^*$ which
is the \emph{syntactic substitution} of the language $\#^*$ in the symmetric
language~$S$.

Let $M = a S_\# \ba \cup \eps$.
The language $M^{(\infty)}$ is defined recursively in the following way:
$x\in M^{(\infty)}$ iff either $x\in M$ or
$$
x = ay_1a z_1\ba y_2az_2\ba\cdots y_{n-1} az_{n-1}\ba y_n\ba,
$$
where $y_1,y_n\in X^*$, $y_i\in X^+$ for $2\leq i\leq n-1$,
$az_i\ba\in M^{(\infty)}$ and $ay_1y_2\cdots y_n\ba\in M$.

Let $\pi_X\colon (X\cup A)^*\to A^*$ 
be the morphism that erases symbols
from the alphabet~$X$. The language $M^{(+)}$ is defined to be
$\pi_X^{-1}(A^*\setminus D_1)$. 

Finally, we set $S_\#^\uparrow = M^{(\infty)}\cup M^{(+)}$. 

Note that the languages $S$ and $S_\#$ are rationally equivalent. 
So $S_\#$  is a generator of the cone $\Lin$
of the linear languages.

By combining this observation with Propositions 3.19 and 3.20
from~\cite{BeBoa90}, we get the following fact.

\begin{theorem}
  $S_\#^\uparrow$ is not a generator of the \CFL{} cone, but the cone
  generated by $S_\#^\uparrow$ contains all quasirational languages.
\end{theorem}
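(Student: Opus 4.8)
The statement has two halves: (i) $S_\#^\uparrow$ is \emph{not} a generator of the $\CFL$ cone, and (ii) the cone $\T(S_\#^\uparrow)$ contains all quasirational languages. The excerpt already tells us to combine "Propositions 3.19 and 3.20 from~\cite{BeBoa90}" with the observation that $S_\#$ is a generator of $\Lin$, so the task is to explain how these pieces fit together rather than to reprove Boasson's results from scratch.

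First I would pin down the structural role of $S_\#^\uparrow = M^{(\infty)}\cup M^{(+)}$. The piece $M^{(+)}=\pi_X^{-1}(A^*\setminus D_1)$ is a rational-index-free "garbage" component whose only purpose is to make the language robust under transductions; the genuinely context-free, non-quasirational content lives in $M^{(\infty)}$, which is built by the recursive bracketed substitution scheme from $M = aS_\#\ba\cup\eps$. For part (ii) the plan is: since $S_\#$ generates $\Lin$, the syntactic-substitution closure operation implicit in the definition of $M^{(\infty)}$ (nesting copies of $M$ inside one another, separated by $X$-words) realizes, up to rational transduction, the substitution closure of $\Lin$ — which is exactly $\Qrt$. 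Concretely, I would show that for any quasirational $L$ there is a rational transducer extracting from suitable words of $M^{(\infty)}$ the image of $L$ under a bounded-depth iterated substitution of linear languages, using that each nesting level contributes one application of a linear-language substitution and that the $X$-markers $y_i$ let the transducer locate the substitution sites. This is where Proposition 3.20 of~\cite{BeBoa90} is invoked: it should give precisely that $\T(S_\#^\uparrow)\supseteq\Qrt$ (or that $S_\#^\uparrow$ dominates a generator of $\Qrt$).

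For part (i), the plan is to appeal to Proposition 3.19 of~\cite{BeBoa90}, which characterizes when a language fails to generate $\CFL$. The intuition: generators of the $\CFL$ cone must be able to "simulate" the two-sided, fully nested bracketing of $D_2$, and Boasson's construction of $S_\#^\uparrow$ is engineered so that the nesting it supports is always, in an appropriate sense, of \emph{linear} shape at each level — the component $M$ is a single linear bracket $aS_\#\ba$, and the recursion only iterates this rather than allowing genuine two-dimensional branching. One formalizes "not a generator" via an invariant preserved by rational transductions and by the cone $\T(S_\#^\uparrow)$ but violated by $D_2$; a natural candidate is a combinatorial or rational-index-type property (Boasson's original argument uses an iteration/pumping lemma tailored to this family). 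I would state the invariant, note that $M^{(\infty)}$ satisfies it by structural induction on the recursive definition, check $M^{(+)}$ satisfies it trivially, verify it is closed under rational cones, and exhibit a generator of $\CFL$ (e.g.\ $D_2$) that fails it.

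\textbf{The main obstacle.} The genuinely hard part is part (i): proving the \emph{negative} statement that $S_\#^\uparrow$ does not generate $\CFL$. Positive cone-membership claims reduce to constructing transducers (routine, given the $X$-markers), but a non-generation result requires a transduction-invariant that is delicate to isolate and even more delicate to verify is preserved under the recursive closure defining $M^{(\infty)}$ — the very reason the excerpt defers to Boasson's Proposition 3.19 rather than reproving it. If I could not quote that proposition, the realistic fallback would be to reconstruct Boasson's iteration lemma for this family: show that long words in $S_\#^\uparrow$ admit a pumping decomposition whose "width" is bounded in a way that $D_2$-words provably cannot match, and then transport this through rational transductions using their boundedness properties. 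That reconstruction, not the quasirationality inclusion, is where essentially all the work lies.
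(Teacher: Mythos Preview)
Your proposal is correct and matches the paper's approach exactly: the paper gives no proof beyond the sentence ``By combining this observation with Propositions~3.19 and~3.20 from~\cite{BeBoa90}, we get the following fact,'' so the entire argument is the citation together with the observation that $S_\#$ generates $\Lin$. Your further sketch of how one might reconstruct Boasson's non-generation argument is reasonable speculation but goes well beyond what the paper itself provides.
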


The language $S_\#^\uparrow$ is the union of two languages. 
In the proof of the \PP-completeness  for the problem
$\nreg(S_\#^\uparrow)$, we will use automata that do not accept words
from the language~$M^{(+)}$. 
For this purpose we need a notion of a marked automaton.

\begin{Def}
  An NFA $\A$ over the alphabet $A_n\cup \bar A_n$ 
  is \emph{marked} if there exists
  a function $h\colon 
  Q_\A\to \ZZ$ satisfying the relations
  \begin{equation}
    \begin{aligned}
      &h(q') =  h(q)+1, &&\text{if there exists a transition }
      q\xrightarrow{a_j}q'\ \text{in }\A,\\
      &h(q') =  h(q)-1, &&\text{if there exists a transition }
      q\xrightarrow{\ba_j}q'\ \text{in }\A,\\
      &h(q)=0, &&\text{if  $q$ is either the initial state or an
      accepting state of $\A$.}\vphantom{q\xrightarrow{a_j}q'}
    \end{aligned}
  \end{equation}
\end{Def}

In what follows we will identify for brevity the (directed)
paths along the graph of an NFA and the corresponding words in the
alphabet of the automaton. The vertices of the graph, i.e., the states
of the automaton, are identified in this way with the
\emph{positions} of the word.

The \emph{height} of a position  is the
difference between the number of the symbols~$a_i$
and the number of the symbols~$\ba_i$   preceding the position.
In terms of the position heights,
the words in $D_1$ are characterized by two conditions: the height
of any position is nonnegative and the height of the final position is~0.

\begin{prop}\label{height-pos}
  Let $\A$ be an NFA such that $D_2\cap L(\A)\ne\es$. Then there exists
  a word $w\in D_2\cap L(\A)\ne\es$ such that the height of any
  position in the word $w$ is $O(|Q_\A|)^2$.
\end{prop}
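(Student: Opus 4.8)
The plan is to take a \emph{shortest} word $w\in D_2\cap L(\A)$ together with a fixed accepting run $\rho$ of $\A$ on $w$ (such a $w$ exists since $D_2\cap L(\A)\neq\es$), and to show that minimality of $|w|$ forces every position of $w$ to have height at most $|Q_\A|^2$. Suppose for contradiction that some position $\pi$ has height $H>|Q_\A|^2$. By the definition of height, exactly $H$ brackets are open at $\pi$, and since $w\in D_2$ is properly nested, these brackets are linearly ordered by inclusion of their matched blocks, say $c_1,\dots,c_H$ from the outermost to the innermost, with $c_i$ matched by $\bar c_i$. To each $c_i$ associate the pair $(p_i,q_i)\in Q_\A\times Q_\A$, where $p_i$ is the state of $\rho$ just before the transition reading $c_i$ and $q_i$ is the state just after the transition reading $\bar c_i$ (phrasing everything in terms of $\rho$ makes the argument insensitive to $\eps$-transitions).

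Since there are $H>|Q_\A|^2$ such pairs, two of them coincide: $(p_j,q_j)=(p_k,q_k)$ for some $j<k$. Because the matched block of $c_k$ is nested strictly inside the matched block of $c_j$, we may write $w = x\,c_j\,y\,z\,y'\,\bar c_j\,x'$, where $z=c_k\,\tilde z\,\bar c_k$ is the matched block of $c_k$ and $y,y'$ are the (balanced) factors of $c_j$'s block lying before $c_k$ and after $\bar c_k$. The surgery replaces $w$ by $w'=x\,z\,x'$. On the automaton side we splice into $\rho$ the sub-run from $p_k=p_j$ to $q_k=q_j$ that reads $z$ in place of the sub-run from $p_j$ to $q_j$ that reads $c_j\,y\,z\,y'\,\bar c_j$; this is a legal run precisely because the endpoints agree, so $w'\in L(\A)$. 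On the word side we delete the balanced factor $c_j\,y\,z\,y'\,\bar c_j$ from $w\in D_2$ and re-insert the balanced factor $z$ at the resulting position.

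It remains to verify that $w'\in D_2$ and that $|w'|<|w|$. The length estimate is immediate: $|w'|=|w|-(2+|y|+|y'|)<|w|$, contradicting minimality of $w$; hence no position of $w$ exceeds height $|Q_\A|^2=O(|Q_\A|)^2$. I expect the $D_2$-membership step to be the one requiring genuine care: it is tempting to argue only with prefix heights (which do stay nonnegative and return to $0$), but membership in $D_2$ also demands that the bracket matching be type-correct, so the argument must be organised around deleting and inserting \emph{balanced} factors rather than arbitrary substrings. Concretely, one checks that removing a balanced factor $B$ from a word of $D_2$ leaves a word of $D_2$ (every bracket of $B$ matches inside $B$, so the remaining brackets keep their partners), and that inserting a word of $D_2$ at any position of a word of $D_2$ again gives a word of $D_2$ (a balanced insertion shifts positions but preserves every matching pair and keeps all heights nonnegative). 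Applying both facts to $B=c_j\,y\,z\,y'\,\bar c_j$ and to $z$ yields $w'\in D_2$, completing the argument.
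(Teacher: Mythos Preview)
Your argument is correct and complete, but it takes a different route from the paper. The paper does not pump the shortest word directly; instead it invokes the triple construction of Lemma~\ref{ATrans} applied to a Chomsky-normal-form grammar for $D_2$, obtaining a grammar for $D_2\cap L(\A)$ with $O(|Q_\A|^2)$ nonterminals. It then uses the standard fact that a non-empty CFL has a word whose derivation-tree height is at most the number of nonterminals, and observes that the position height in such a word is bounded by the derivation-tree height. Your approach is more elementary and self-contained: the pigeonhole on state pairs $(p_i,q_i)$ attached to the nested open brackets is exactly the combinatorial content of the tree-height bound, extracted without building the product grammar. The paper's version, on the other hand, reuses machinery (Lemma~\ref{ATrans}) that is needed elsewhere anyway, so it is shorter in context.

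One small inaccuracy worth fixing: the parenthetical claim that $y$ and $y'$ are \emph{balanced} is false in general. If $k>j+1$ then the intermediate brackets $c_{j+1},\dots,c_{k-1}$ open inside $y$ and close inside $y'$, so neither factor is in $D_2$. Fortunately your proof never uses this claim: the surgery only relies on $B=c_j\,y\,z\,y'\,\bar c_j$ and $z$ being balanced (they are, as matched blocks), and your delete-then-insert justification for $w'\in D_2$ goes through unchanged. Just drop the word ``balanced'' from the description of $y,y'$.
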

\begin{proof}
  The heights of positions are upperbounded by the height of
  the derivation tree in the grammar generating the language $D_2\cap
  L(A)\ne\es$.

  It is easy to see that for any grammar generating a non-empty language
  there is a word such that the height of a derivation tree for the
  word is at most the number of nonterminals in the grammar. 

  To finish the proof, we use the grammar constructed by Lemma~\ref{ATrans}
from the grammar generating $D_2$ in the Chomsky normal form. This grammar has $O(|Q_\A|^2)$ nonterminals.
\end{proof}

In the proof below we need a syntactic transformation of automata
over the alphabet $A_2\cup \bar A_2$.

\begin{prop}\label{marking-transform}
  There exists a transformation $\mu$ that takes a description of an
  automaton $\A$ over the alphabet $A_2\cup \bar A_2$ and produces a
  description of a marked automaton $\A'=\mu(\A)$ such that 
  (i) $L(\A)\cap  D_2\ne\es$ iff $L(\A^\prime)\cap  D_2\ne\es$
  and (ii) for any $ w\in L(\A^\prime)$ the height of any position is
  nonnegative and the height of the final position is~$0$.
  The   transformation $\mu$ is computed in deterministic log space.
\end{prop}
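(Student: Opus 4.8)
The goal is to turn an arbitrary NFA $\A$ over $A_2\cup\bar A_2$ into a marked automaton $\A'$ that (a) preserves the non-emptiness of the intersection with $D_2$ and (b) only accepts words that belong to $D_1$ in the projection onto heights, i.e. every position has nonnegative height and the final position has height $0$. The natural idea is to simulate a height counter inside the state: a state of $\A'$ is a pair $(q,h)$, where $q\in Q_\A$ and $h$ is an integer recording the current height, and a transition $q\xrightarrow{a_j}q'$ of $\A$ is reproduced as $(q,h)\xrightarrow{a_j}(q',h+1)$, similarly $q\xrightarrow{\bar a_j}q'$ becomes $(q,h)\xrightarrow{\bar a_j}(q',h-1)$. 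We let $(q_0,0)$ be the initial state and $(q,0)$ with $q\in F_\A$ be the accepting states, and we simply never create states with negative height. Then setting $h((q,h))=h$ witnesses that $\A'$ is marked, and condition (ii) holds by construction: any accepted word runs from height $0$ through nonnegative heights back to height $0$.

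The one thing that must be controlled is the \emph{range} of the counter, since otherwise $\A'$ is infinite and certainly not log-space constructible. This is exactly where Proposition~\ref{height-pos} is used: if $L(\A)\cap D_2\ne\es$, there is a witness word all of whose positions have height $O(|Q_\A|^2)$. So we bound the counter by a fixed polynomial $N(|Q_\A|) = c|Q_\A|^2$ coming from that proposition, and define $Q_{\A'} = Q_\A\times\{0,1,\dots,N\}$, dropping any transition that would push the height below $0$ or above $N$. The forward direction of (i) — if $L(\A')\cap D_2\ne\es$ then $L(\A)\cap D_2\ne\es$ — is immediate, since erasing the second coordinate maps an accepting run of $\A'$ to an accepting run of $\A$ on the same word, and that word is in $D_2$ because acceptance forces final height $0$ while each transition updates the height correctly, so the word already lies in $D_2$ (the Dyck condition on $A_2\cup\bar A_2$ is: matched bracket types \emph{and} balanced heights; but here we must be slightly careful). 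For the reverse direction, given a witness $w\in L(\A)\cap D_2$ with all heights $\le N$ (Proposition~\ref{height-pos}), the run of $\A$ on $w$ lifts to a run of $\A'$ that stays within the allowed counter range, so $w\in L(\A')$, and $w\in D_2\subseteq$ (the height-nonnegative, final-height-zero language), so $w\in L(\A')\cap D_2$.

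The subtle point — and I expect it to be the only real obstacle — is that $D_2$ requires more than height-balance: it requires that the bracket \emph{types} match ($a_1$ pairs with $\bar a_1$, $a_2$ with $\bar a_2$). The counter construction alone does not enforce type-matching; it only enforces the $D_1$-style height condition. So strictly speaking $L(\A')$ need not be contained in $D_2$, only in the set of words with nonnegative-and-finally-zero height profile, which is precisely what clause (ii) of the statement asks for — clause (ii) deliberately does \emph{not} claim $L(\A')\subseteq D_2$. Hence the construction is consistent with the statement, and in (i) the equivalence is about intersections with $D_2$, both of which are handled by the lifting/projection argument above, with the type-matching part of the $D_2$ condition carried along verbatim by $w$ (it is unaffected by the transformation, which only annotates states).

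Finally, log-space computability: the state set $Q_\A\times\{0,\dots,N\}$ with $N$ a fixed polynomial in $|Q_\A|$ can be enumerated by counting up to $|Q_\A|\cdot N = \poly(|Q_\A|)$, and each transition of $\A'$ is generated from a transition of $\A$ by one addition/subtraction and a range check, all doable with $O(\log|Q_\A|)$ bits of workspace. The initial and accepting states of $\A'$ are read off directly. Thus $\mu$ is computed in deterministic log space, completing the plan.
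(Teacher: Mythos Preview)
Your proposal is correct and follows essentially the same approach as the paper: pair each state of $\A$ with a height counter bounded by the polynomial $O(|Q_\A|^2)$ supplied by Proposition~\ref{height-pos}, take $(q_0,0)$ as initial state and $F_\A\times\{0\}$ as accepting set, and argue (i) via projection/lifting of runs and (ii) directly from the construction. The only cosmetic difference is that the paper routes out-of-range transitions to an absorbing rejecting sink rather than simply omitting them as you do; your variant is equally valid (and indeed makes the marking function $h((q,h))=h$ well-defined on every state without further care).
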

\begin{proof}
  Let $m$ be an upper bound on the heights of the positions in a word
  $w\in L(\A)\cap  D_2$. By
  Proposition~\ref{height-pos}, $m$ is $O(|Q_\A|^2)$ . Note that $m$ can be computed in
  deterministic log space.

  The state set of the automaton $\A'$ is $Q_\A\times\{0,\dots, m\}\cup\{r\}$,
  where $r$ is the specific absorbing rejecting state.

  If $q\xrightarrow{\al} q'$, where $\al\in \{a_1,a_2\}$, is a transition
  in the automaton $\A$ then there are transitions
  $(q,i)\xrightarrow{\al}(q',i+1)$ for all $0\leq i<m$ and the
  transition $(q,m)\xrightarrow{\al} r$ in the automaton $\A'$.

  If $q\xrightarrow{\al} q'$, where $\al\in \{\ba_1,\ba_2\}$, is a transition
  in the automaton $\A$ then there are transitions
  $(q,i)\xrightarrow{\al}(q',i-1)$ for all $0< i\leq m$ and the
  transition $(q,0)\xrightarrow{\al} r$  in the automaton $\A'$.
  
  The initial state of the automaton $\A'$ is $(q_0,0)$, where $q_0$ is
  the initial state of the automaton $\A$. The set of accepting states of the
  automaton $\A'$ is $F\times\{0\}$, where $F$ is the set of
  accepting states of the automaton $\A$.

  It is clear that the description of the automaton $\A'$ is
  constructed in deterministic log space.

  Condition (ii) is forced by the construction of the
  automaton~$\A'$. It remains to prove that condition (i) holds.

  Note that if $L(\A)\cap D_2=\es$ then $L(\A')\cap D_2=\es$ too. In the
  other direction, if $L(\A)\cap D_2\ne\es$, then by
  Proposition~\ref{height-pos} there exists a word $w\in L(\A)\cap D_2$
  such that the height of any position in the word does not
  exceed~$m$. So the word is accepted by the automaton~$\A'$.  
\end{proof}

\begin{theorem}
  $\nreg(S_\#^\uparrow)$ is $\PP$-complete under deterministic log
  space reductions.
\end{theorem}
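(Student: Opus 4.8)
The plan is to establish the two bounds separately. \emph{Membership in $\PP$}: since $S_\#^\uparrow$ is context-free, the Chomsky--Sch\"utzenberger theorem gives $S_\#^\uparrow\in\T(D_2)$, hence $S_\#^\uparrow\lerat D_2$, and Lemma~\ref{transductions-vs-logspace-reductions} together with the $\PP$-completeness of $\nreg(D_2)$ yields $\nreg(S_\#^\uparrow)\lelog\nreg(D_2)\in\PP$. (One can also argue directly by composing Lemmas~\ref{CS-alg} and~\ref{ATrans}.)

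\emph{$\PP$-hardness.} I would reduce from $\nreg(D_2)$ by a direct log-space reduction; a single rational transduction cannot suffice, because $D_2\notin\T(S_\#^\uparrow)$ (otherwise $S_\#^\uparrow$ would generate the whole $\CFL$ cone). Given an NFA $\A$ over $A_2\cup\bar A_2$, first replace it by the marked automaton $\A_1=\mu(\A)$ of Proposition~\ref{marking-transform}: then $L(\A_1)\cap D_2\ne\es$ iff $L(\A)\cap D_2\ne\es$, every $v\in L(\A_1)$ has all position heights nonnegative and final height $0$ (so the word obtained from $v$ by erasing the bracket subscripts lies in $D_1$ and the tree of matched bracket pairs of $v$ is well defined), and these heights never exceed an $m=O(|Q_\A|^2)$ that is encoded in the states of $\A_1$. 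Recall that $v\in D_2$ iff every matched pair $a_s\ldots\ba_t$ of this tree has $s=t$.

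The reduction outputs an NFA $\B$ that simulates $\A_1$ while keeping in its finite control the current height and a bit recording whether the previous transition just closed a bracket at the current level. Along a run of $\A_1$, $\B$ spells a word over $X\cup\{a,\ba\}$ as follows: first $a\,x_1$ (opening a global root node); on each transition on a letter $a_s$ it spells $x_1$ first if the bit is set (a ``filler'' separating two consecutive sibling subtrees) and then $a\,x_s$ (opening the node of this bracket and recording its opening type); on each transition on a letter $\ba_t$ it spells $\bx_1^{k}\,\bx_t\,\ba$ for a nondeterministically chosen $k\ge 0$ (closing that node, recording its closing type); and after acceptance it spells $\bx_1^{k}\,\bx_1\,\ba$ (closing the root). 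The $a/\ba$-letters produced are always perfectly nested, so the $\pi_X$-image of any word of $L(\B)$ lies in $D_1$; hence $L(\B)\cap M^{(+)}=\es$ and $L(\B)\cap S_\#^\uparrow=L(\B)\cap M^{(\infty)}$. Parsing the $a/\ba$-nesting of a word $w\in L(\B)$ as an ordered tree, $w\in M^{(\infty)}$ exactly when at each node the associated factor $y_1\cdots y_n$ lies in $S_\#$; for a node coming from a bracket $a_s\ldots\ba_t$ with $r$ children this factor equals $x_s\,x_1^{r-1}\,\bx_1^{k}\,\bx_t$ (with $x_1^{-1}=\eps$ when $r=0$), which lies in $S$ iff $s=t$ and $k=\max(r-1,0)$; the root node obeys the same condition with $s=t=1$ built in. Therefore, if $v\in D_2$ the correct choices of the numbers $k$ put $w$ in $M^{(\infty)}$, while if $v\notin D_2$ some node has $s\ne t$ and no choice of the $k$'s can. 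Hence $L(\B)\cap S_\#^\uparrow\ne\es$ iff $L(\A_1)\cap D_2\ne\es$; since $\B$ is of polynomial size and log-space constructible, this is the desired reduction.

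I expect the main work to be the correctness of this encoding: checking that the inductive clause of $M^{(\infty)}$ --- notably the demand that the inner factors $y_i$ be non-empty --- is met by the single-letter fillers; checking that the symmetric language $S$ forces each guessed exponent $k$ to equal the corresponding number of fillers, so that the only surviving constraint at each node is the equality $s=t$ of bracket types; and settling the degenerate configurations (a bracket with empty interior, a node with one child, the root, and $v=\eps$), together with the routine observation that $\B$ stays of polynomial size (the height is bounded by $m$ and only $O(1)$ extra state is used).
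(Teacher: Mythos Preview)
Your proof is correct and follows the same overall architecture as the paper: membership via the Chomsky--Sch\"utzenberger reduction, and hardness via a log-space reduction from $\nreg(D_2)$ that first passes through the marking transformation $\mu$ so that $L(\B)\cap M^{(+)}=\es$ is forced and only the $M^{(\infty)}$-component matters.

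Where you diverge is in the encoding. The paper never uses a state bit or any nondeterministic guessing: it defines the fixed letter-to-word morphism
\[
\ph(a_i)=a\,x_i,\qquad \ph(\ba_i)=\bx_i\,\ba\,\#\#,
\]
and lets $\B$ accept exactly the words $a\,x_1x_2\,\ph(u)\,\bx_2\bx_1\,\ba$ for $u\in L(\A')$. The trailing $\#\#$ after every closing bracket already guarantees the ``$y_i\in X^+$'' requirement in the recursive clause of $M^{(\infty)}$, and at each node the concatenation $y_1\cdots y_n$ collapses (after erasing $\#$'s) to the two-letter word $x_s\bx_t$, so the $S_\#$-membership test reduces \emph{immediately} to ``$s=t$ for every matched pair''. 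Your encoding achieves the same effect without using $\#$ at all, paying instead with (i)~a one-bit flag to insert an $x_1$ filler between siblings and (ii)~a nondeterministically guessed block $\bx_1^{\,k}$ whose correct length is then \emph{verified} by $S$. Both devices are sound, and your case analysis (leaves, single child, many children, the root, $v=\eps$) is accurate; but the paper's morphism makes the correctness argument a one-liner and keeps $\B$ a trivial syntactic rewrite of $\A'$, whereas your construction needs the extra state and the loop on $\bx_1$. In short: same skeleton, but the paper exploits the $\#$-symbol built into $S_\#$ to avoid all the bookkeeping you introduce.
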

\begin{proof}
  We reduce $\nreg(D_2)$ to $\nreg(S_\#^\uparrow)$.

  Let $\A$ be an input of the problem $\nreg(D_2)$ and $\A' = \mu(\A)$ be
  the marking transformation of the automaton~$\A$. 
  
  We are going  to construct the automaton $\B$ over the alphabet
  $A\cup X\cup\{\#\}$ such that $L(\A')\cap D_2\ne\es$ iff $L(\B)\cap
  S_\#^\uparrow\ne\es $. 

  The morphism $\ph\colon (A_2\cup\bar A_2)^*\to 
  (A\cup X\cup\{\#\})^* $ is defined as follows:
  \begin{equation}\label{D2->M}
    \begin{split}
      &\ph\colon a_1 \mapsto ax_1,\\
      &\ph\colon \ba_1\mapsto \bx_1\ba\#\#,\\
      &\ph\colon a_2 \mapsto ax_2,\\
      &\ph\colon \ba_2\mapsto \bx_2\ba\#\#.
    \end{split}
  \end{equation}

  The automaton $\B$ accepts words of the form $ax_1x_2 w
  \bx_2\bx_1\ba$, where $w = \ph(u)$. It simulates the
  behavior of the automaton $\A'$ on the word $u$ and accepts iff $\A'$
  accepts the word~$u$.

  It follows from the definitions that if $u\in D_2$ then $ax_1x_2
  \ph(u)\bx_2\bx_1\ba \in M^{(\infty)}$. So if $L(\A')\cap D_2\ne \es$
  then $L(\B)\cap S_\#^\uparrow\ne\es$. 

  Now we are going to prove the opposite implication. Let 
  $$
  w = ax_1x_2   \ph(u)\bx_2\bx_1\ba\in S_\#^\uparrow\cap L(\B).
  $$

   The automaton $\A'$ is marked and $\B$ simulates the behavior of
   $\A'$ on $u$. 
   So the heights of positions in $w$ are nonnegative and the height
   of the final position is~$0$. Thus $w\notin M^{(+)}=
   \pi_X^{-1}(A^*\setminus D_1)$.
   Take a pair of the
  corresponding parentheses $a$, $\ba$ in the word~$w$:
  $$
  w = w_0 ax_i w_1 \bx_j\ba w_2.
  $$
  If $i\ne j$ then $w\notin M^{(\infty)}$. So $i=j$ for all pairs of
  the corresponding parentheses. This implies $u\in D_2\cap L(\A')$.

  We just have proved the correctness of the reduction. 
It can be computed in log space due to the following observations.  
To produce the automaton $\B$ from the
  automaton $\A$ we need to extend the state set by a finite number of
  pre- and postprocessing  states to operate with the prefix $ax_1x_2$
  and with the suffix $\bx_2\bx_1\ba$. Also we need to split all
  states in $Q_{\A'}$ 
  in pairs to organize the simulation of $\A'$ while reading the pairs
  of symbols $ax_i$ and $\bx_i\ba$. The transitions by the symbol $\#$
  are trivial: $q\xrightarrow{\#}q$ for all~$q$. 
\end{proof}

\section{Easy  RR problems with CFL filters}\label{easy}

Now we present  examples of easy languages. The simplest example is
rational languages. Next we prove that the symmetric language and the
language $D_1$ are easy. A simple observation shows that
a substitution of easy languages into an easy language is easy. Thus
we conclude that Greibach languages are easy.

\begin{lemma}\label{Seasy}
  $\nreg(S)\in\NL$.
\end{lemma}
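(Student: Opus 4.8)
The plan is to show that $\nreg(S)$ reduces to a reachability problem on a suitably defined configuration graph whose size is polynomial in the size of the input automaton, so that the problem lies in $\NL$. Recall that $S$ is generated by $S\to x_1 S\bx_1 \mid x_2 S\bx_2 \mid \eps$, so a word belongs to $S$ iff it has the form $x_{i_1}x_{i_2}\cdots x_{i_k}\bx_{i_k}\cdots\bx_{i_2}\bx_{i_1}$: a prefix over $\{x_1,x_2\}$ followed by its ``mirror image'' over $\{\bx_1,\bx_2\}$ with matching indices in reverse order. The key structural observation is that although the matching is nested (stack-like), the stack content is completely determined by the word read so far, and more importantly, once we fix the midpoint we only need to check that the left half and the (reversed, barred) right half spell the same string. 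So a word in $S\cap L(\A)$ corresponds to a pair of paths in $\A$ — one from the initial state to some ``middle'' state $q_m$ reading $x_{i_1}\cdots x_{i_k}$, and one from $q_m$ to an accepting state reading $\bx_{i_k}\cdots\bx_{i_1}$ — such that the two index sequences are reverses of each other.

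First I would build a product-style graph $H$ on vertex set $Q_\A \times Q_\A$ (pairs of states), with a distinguished set of ``start'' vertices $\{(q,q): q\in Q_\A\}$ and ``target'' vertices $\{(q_0,q_f): q_f\in F_\A\}$, where $q_0$ is the initial state of $\A$. The idea is that a vertex $(p,r)$ represents the assertion ``there is a path from $p$ to some middle state reading a word $v\in\{x_1,x_2\}^*$, and a path from that middle state to $r$ reading the matching reversed barred word.'' I put an edge $(p,r)\to(p',r')$ in $H$ whenever for some $i\in\{1,2\}$ there is a transition $p\xrightarrow{x_i}p'$ in $\A$ and a transition $r'\xrightarrow{\bx_i}r$ in $\A$ (note the reversed direction on the second coordinate, since we peel parentheses from the outside in). I also need to handle $\eps$-transitions of $\A$ by adding the corresponding ``free'' moves in either coordinate. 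Then $S\cap L(\A)\ne\es$ iff some start vertex $(q,q)$ is reachable in $H$ from some target vertex $(q_0,q_f)$ — equivalently, running the edges in reverse, iff a target vertex is reachable from a start vertex. Since $|V(H)| = |Q_\A|^2$ and $H$ is log-space constructible from $\A$, this is an instance of directed reachability, which is in $\NL$; and $\NL$ is closed under log-space reductions, so $\nreg(S)\in\NL$.

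The main thing to get right — and the step I expect to be the only real subtlety — is the correctness of the reduction, specifically the claim that the ``outside-in'' matching captured by the product graph exactly characterizes membership in $S$. One has to argue carefully that if $(q,q)$ is reachable from $(q_0,q_f)$ via a sequence of edges labelled (in the appropriate sense) by indices $i_1,\dots,i_k$, then stitching together the two coordinate-paths yields an accepting run of $\A$ on the word $x_{i_1}\cdots x_{i_k}\bx_{i_k}\cdots\bx_{i_1}\in S$, and conversely. The converse is immediate from the grammar of $S$; the forward direction needs one to observe that the middle state where the two coordinate-paths meet is exactly the second coordinate of the start vertex $(q,q)$ we landed on. A minor bookkeeping point is the treatment of $\eps$-moves and of the empty word $\eps\in S$ (handled by the start vertices coinciding with target vertices when $q_0\in F_\A$); these are routine. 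Alternatively, one could phrase the whole argument as: apply Lemma~\ref{ATrans} with $G$ the grammar for $S$ in Chomsky normal form to get a grammar $G'$ with $O(|Q_\A|^2)$ nonterminals generating $S\cap L(\A)$, and then observe that emptiness testing for the \emph{linear} grammar $G'$ reduces to reachability in the graph whose vertices are the nonterminals of $G'$ — this makes the polynomial size bound and log-space constructibility immediate and isolates the linearity of $S$ as the reason the problem stays in $\NL$ rather than climbing to $\PP$.
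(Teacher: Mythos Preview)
Your argument is correct and is precisely the approach the paper has in mind: the paper does not spell out a proof but defers to the palindrome-detection argument of~\cite{ALRSS09}, and your product construction on $Q_\A\times Q_\A$ (peeling matched pairs $x_i/\bx_i$ from the outside in until the two coordinates meet) is exactly that argument adapted from palindromes to the symmetric language. The alternative phrasing via Lemma~\ref{ATrans} and emptiness of the resulting \emph{linear} grammar is also valid and makes the same point.
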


The proof of Lemma~\ref{Seasy} is a slight modification of the
arguments from~\cite{ALRSS09} that prove a similar result for the
language of palindromes.

\begin{lemma}\label{countertheorem}
	Let $L_c$ be a context-free language recognizable by a counter
	automaton. Then problem $\nreg(L_c)$ lies in $\NL$. 
\end{lemma}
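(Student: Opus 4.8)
The plan is to reduce $\nreg(L_c)$ to a reachability-type problem in a product graph that can be searched nondeterministically in logarithmic space. First I would set up the configuration graph of the counter automaton $M$ recognizing $L_c$: a configuration is a pair $(p, c)$ where $p$ is a control state and $c \ge 0$ is the counter value, but since the counter is unbounded this graph is infinite. The key observation — analogous to Proposition~\ref{height-pos} above for $D_2$ — is that if $L_c \cap L(\A) \ne \es$ then there is a witness word whose computation never drives the counter above some bound polynomial in $|Q_\A|$; indeed, $L_c \cap L(\A)$ is a context-free language with a grammar of size $\poly(|Q_\A|)$ by Lemma~\ref{ATrans}, so the shortest word in it has a derivation tree of polynomial height, and in a counter automaton the counter value is bounded by the derivation-tree height (the counter behaves like a single bracket type). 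Hence it suffices to search the finite product graph whose vertices are triples $(q, p, c)$ with $q \in Q_\A$, $p$ a state of $M$, and $0 \le c \le \poly(|Q_\A|)$.

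Next I would describe the search itself. The counter value $c$ is bounded by a polynomial in the input size, so it is storable in $O(\log |\A|)$ bits; the states $q, p$ likewise take logarithmic space. A nondeterministic log-space machine guesses, one letter at a time, a word $u$ together with a synchronized run of $\A$ and of $M$ on $u$: it maintains the current triple $(q, p, c)$, nondeterministically picks the next input letter $\sigma$ together with compatible transitions $q \xrightarrow{\sigma} q'$ in $\A$ and a move of $M$ that reads $\sigma$ (or $\eps$), updates the counter accordingly (rejecting if it would go negative or exceed the bound), and moves to $(q', p', c')$. The machine accepts if it ever reaches a triple with $q \in F_\A$, $p$ accepting in $M$, and $c = 0$ (or whatever the acceptance condition of a counter automaton is — empty counter and/or final state). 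This is exactly $s$-$t$ reachability in a polynomial-size implicitly described graph, hence in $\NL$ by the Immerman–Szelepcsényi framework (really just by the fact that reachability is in $\NL$), and since $\NL$ is closed under log-space reductions the whole problem lies in $\NL$.

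The main obstacle — and the step deserving the most care — is the bound on the counter value. I would want to pin down precisely the model of "counter automaton" being used (one counter, tested for zero or not, $\eps$-moves allowed, acceptance by final state and/or empty counter) and then argue cleanly that the counter height along a shortest accepting computation of $M$ on some $u \in L(\A)$ is $O(\poly(|Q_\A|))$. The cleanest route, as sketched, is via Lemma~\ref{ATrans}: build a grammar $G'$ for $L_c \cap L(\A)$ of polynomial size, take a shortest word $w$ in $L(G')$, note its derivation tree has height at most the number of nonterminals (polynomial), and observe that the counter value at each point of $M$'s computation on $w$ is bounded by the nesting depth of the corresponding subderivation — because a one-counter automaton's counter increments and decrements mirror the bracket structure of a context-free derivation. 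Once this polynomial bound is established, the remainder is the routine $\NL$-reachability argument above; care is only needed to ensure $M$ is \emph{fixed} (so its description is a constant) and that the product construction is genuinely log-space with respect to $|\A|$.
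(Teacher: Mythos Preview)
Your approach matches the paper's: form the product of the fixed counter automaton $M$ with the input NFA $\A$, bound the counter polynomially in $|Q_\A|$, and reduce to reachability in the resulting polynomial-size configuration graph. The only difference is how the counter bound is obtained. The paper invokes a ready-made result (Lemma~\ref{counterlemma}) stating that an $m$-state counter automaton has a shortest accepted word along which the counter never exceeds $m^2$; applied directly to the product automaton $M_\A$ with $m=|Q_M|\cdot|Q_\A|$ this yields the bound in one line. Your derivation-tree route via Lemma~\ref{ATrans} is viable but needs one clarification you gloss over: the fixed grammar for $L_c$ fed into Lemma~\ref{ATrans} must be the one obtained from $M$ by the standard PDA-to-grammar translation, so that derivation-tree depth actually tracks $M$'s counter; an arbitrary grammar for $L_c$ carries no information about $M$'s counter values. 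With that choice, the height bound on a shortest derivation in $G'$ does bound the counter along one specific accepting computation of $M$ on the corresponding word, which is all the nondeterministic search needs.
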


In the proof we will use the following fact.

\begin{lemma}[\cite{Abuzer12}]\label{counterlemma}
  Let $M$ be a counter automaton with $n$ states. Then the shortest
  word $w$ from the language $L(M)$ has length at most $n^3$ and
  the counter 
  of $M$ on processing the word $w$ doesn't exceed the value $n^2$. 
\end{lemma}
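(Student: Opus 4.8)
The plan is to argue by a cut-and-paste (pumping) argument on accepting computations, viewing each computation as a lattice path traced by the counter. Fix a shortest word $w\in L(M)$ and, among all accepting runs on $w$, a run $\rho=C_0C_1\cdots C_T$ with the fewest steps, where $C_t=(q_t,c_t)$ records the state and the counter value, $c_0=c_T=0$, $q_T$ is accepting, and $c_t\geq 0$ throughout. The basic principle is that no configuration repeats along $\rho$: if $C_s=C_t$ with $s<t$, then excising the steps between them yields an accepting run on a no-longer word with strictly fewer steps, contradicting minimality of $w$ (if the excised block reads a letter) or of $\rho$ (if it reads none). The difficulty, compared with the finite-automaton case, is that this principle alone does \emph{not} bound the counter height $H=\max_t c_t$, because cutting a single loop shifts the counter on the remaining suffix and may drive it negative.

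First I would bound the height by $H\leq n^2$ through a symmetric two-sided cut. Let $t^*$ be the first time the maximum height $H$ is attained. For each level $h$, let $a_h<t^*$ be the time of the \emph{last} up-crossing of level $h$ before $t^*$ (so the path stays strictly above $h$ on the interval after $a_h$ up to $t^*$) and let $b_h>t^*$ be the time of the \emph{first} down-crossing of level $h$ after $t^*$; write $p_h=q_{a_h}$ and $r_h=q_{b_h}$ for the states at these two moments, both at counter value $h$. The map $h\mapsto(p_h,r_h)$ sends the levels into $Q\times Q$, so once more than $n^2$ levels occur there are $h_1<h_2$ with $p_{h_1}=p_{h_2}$ and $r_{h_1}=r_{h_2}$. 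I then splice: keep $\rho$ up to $a_{h_1}$, continue with the block of $\rho$ from $a_{h_2}$ to $b_{h_2}$ but with every counter value lowered by $d=h_2-h_1$, and finish with $\rho$ from $b_{h_1}$ onward. The states match at both seams and the counter values agree there (both $h_1$), so the result is a legal run; moreover the lowered middle block keeps counter value $\geq h_1$ (its interior exceeds $h_2$, its ends equal $h_2$), so every counter value stays nonnegative. The spliced run accepts and is strictly shorter, contradicting minimality; hence $H\leq n^2$.

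Given the height bound, the length bound is a direct counting argument: the number of distinct configurations is at most $n\cdot(n^2+1)$, since there are $n$ states and the counter takes a value in $\{0,1,\ldots,n^2\}$. As $\rho$ visits no configuration twice, it has at most that many configurations, whence $|w|\leq T\leq n(n^2+1)-1=O(n^3)$, matching the claimed bound $n^3$ up to the exact enumeration of reachable configurations.

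The crux of the whole argument is the height bound: the naive one-sided cut fails for a counter automaton, and the remedy is the matched ascent/descent excision above. The two points requiring care are verifying that the spliced path never goes negative and that no zero-test transition becomes inapplicable after the counter is shifted down. For a counter that is only tested at the end, lowering the middle block by $d$ preserves validity directly; when intermediate zero-tests are allowed, one restricts the pigeonhole to levels $h\geq 1$ so that the modified block stays strictly positive and every zero-status guard taken in the original computation is respected, which affects the stated bound by at most an additive constant. Once this splice is justified, everything else is routine pigeonholing.
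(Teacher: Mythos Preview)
The paper does not supply its own proof of this lemma: it is quoted as a known fact from \cite{Abuzer12} and used as a black box in the proof of Lemma~\ref{countertheorem}, so there is no in-paper argument to compare against.

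Your argument is the standard ``hill-cutting'' proof and is essentially correct. The two-sided excision---matching the last ascent through level $h$ before the peak with the first descent through level $h$ after it, and pigeon-holing on the pair of states $(p_h,r_h)$---is exactly the right remedy for the failure of one-sided cuts, and your treatment of zero-tests (restricting the pigeonhole to levels $h\geq 1$) correctly handles the one genuine subtlety. Two minor remarks. First, as you already note, the constants come out as $H\leq n^2+O(1)$ and $|w|\leq T\leq n(H+1)-1=n^3+O(n)$ rather than the exact values in the statement; the paper only uses the asymptotic form in any case. Second, you tacitly assume acceptance with empty counter ($c_T=0$), which is what guarantees that the descent time $b_h$ exists for every level $h\leq H$; if the model accepts by final state alone, one first appends a counter-emptying phase (or argues on the prefix up to the last visit to counter value $0$), again at the cost of only a constant factor.
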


We now return to the proof of Lemma~\ref{countertheorem}.

\begin{proof}
Let $M$ be a counter automaton that accepts  by reaching the final
state such that $M$  recognizes the language
$L_c$.   Let $\A$ be an automaton on the input of the regular
realizability 
problem. 

Construct the counter automaton $M_\A$ with the set of states $Q_M
\times Q_\A$, the initial state $(q^{M}_0,q^{\A}_0)$, with the set of
accepting states $F_M\times F_\A$ and with the transition relation
$\delta_{M_\A}$ such that $\delta_M(q,\sigma,z) \vdash
(q^\prime,z^\prime) $, $\delta_\A(p,\sigma) = p^\prime$ implies
$\delta_{M_\A}((q,p),\sigma, z) \vdash ((q^\prime,p^\prime),
z^\prime)$. This is the standard composition construction.

The automaton $M_\A$ is a counter automaton with
$|Q_M|\cdot|Q_\A|=c\times n$ states. Using Lemma \ref{counterlemma} we
obtain that the value of 
$M_\A$'s counter does not exceed $(cn)^2$ on the shortest word from
$L(M_\A)$. Then construct automaton $\B$ such that $L(\B)$ contains
all such words from $L(M_\A)$ such that the counter of $M_\A$ does not exceed $(cn)^2$. 
The automaton $\B$ has $O(n^3)$ states and can be constructed in log space in
the straightforward way similar to the proof of
Proposition~\ref{marking-transform}. 
Note that  $L(M_\A) \neq \es$ iff $L(\B) \neq \es$. So the map $\A\to
\B$ gives a reduction of the
problem $\nreg(L_c)$ to the problem $\nreg(\Sigma^*)$, which is in
$\NL$.
\end{proof}

The language $D_1$ is recognized by a counter automaton in the obvious
way. 

\begin{corollary}\label{D1easy}
  $\nreg(D_1)\in\NL$.
\end{corollary}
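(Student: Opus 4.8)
The plan is to observe that Corollary~\ref{D1easy} follows immediately from Lemma~\ref{countertheorem} once we exhibit a counter automaton recognizing $D_1$. So the only thing to do is to describe that automaton and check it satisfies the hypotheses of Lemma~\ref{countertheorem}.

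First I would recall the model: a counter automaton is a pushdown automaton whose stack holds only a single kind of symbol, so the stack content is faithfully represented by a nonnegative integer (the counter value), with moves that may test the counter for zero, increment it, or decrement it. Now the Dyck language $D_1$ over $\{a,\ba\}$ is recognized by the obvious one-state counter automaton: on reading $a$ the counter is incremented, on reading $\ba$ the counter is decremented (which is only possible when the counter is positive, so an underflow leads to rejection), and the word is accepted when the input has been consumed and the counter equals $0$. This is exactly the standard characterization of $D_1$ in terms of position heights recalled just before Proposition~\ref{height-pos}: every prefix has nonnegative height and the whole word has height $0$. Hence $L_c := D_1$ is a context-free language recognizable by a counter automaton.

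Applying Lemma~\ref{countertheorem} with this $L_c$ gives $\nreg(D_1)\in\NL$, which is the claim. The step that carries all the content is Lemma~\ref{countertheorem} itself (together with Lemma~\ref{counterlemma}); the present corollary only needs the trivial remark that $D_1$ lies in the scope of that lemma, so there is essentially no obstacle — the sole point to be careful about is matching the exact counter-automaton conventions used in Lemmas~\ref{countertheorem} and~\ref{counterlemma} (acceptance by final state, availability of a zero-test), which the one-state automaton above plainly respects.

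\begin{proof}[of Corollary~\ref{D1easy}]
  The language $D_1$ is recognized by the one-state counter automaton
  that increments its counter on the letter $a$, decrements it on the
  letter $\ba$ (rejecting on an attempted decrement of a zero counter),
  and accepts when the whole input has been read and the counter value
  is~$0$. Indeed, the counter tracks the height of the current position,
  and a word lies in $D_1$ exactly when every position has nonnegative
  height and the final position has height~$0$. Thus $D_1$ is a
  context-free language recognizable by a counter automaton, so
  Lemma~\ref{countertheorem} applies and yields $\nreg(D_1)\in\NL$.
\end{proof}
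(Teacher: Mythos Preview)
Your proposal is correct and matches the paper's approach exactly: the paper merely remarks that $D_1$ is recognized by a counter automaton ``in the obvious way'' and invokes Lemma~\ref{countertheorem}. The only quibble is that a literally one-state machine accepting by final state (the convention fixed in the proof of Lemma~\ref{countertheorem}) would accept all of $\{a,\ba\}^*$; you need an extra state reached by an $\eps$-move on a zero-test, or equivalently acceptance by empty counter --- a triviality you already flag.
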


\begin{lemma}\label{substitution}
  If $L$, $L_a$ for all $a\in A$, are easy languages then $\sigma(L)$
  is also easy.
\end{lemma}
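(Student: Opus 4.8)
The plan is to reduce $\nreg(\sigma(L))$ to a combination of the RR problems $\nreg(L)$ and $\nreg(L_a)$, $a\in A$, all of which are in $\NL$ by hypothesis, and then close under the standard fact that $\NL$ is closed under log-space disjunctive/conjunctive combination (or, more concretely, to reduce directly to digraph reachability). Given an input NFA $\A$ over the output alphabet $B$, I want to decide whether $L(\A)\cap\sigma(L)\ne\es$. A word of $\sigma(L)$ is obtained from some $w=a_1\cdots a_n\in L$ by replacing each letter $a_i$ with some word in $L_{a_i}$. So $L(\A)\cap\sigma(L)\ne\es$ iff there is a word $w=a_1\cdots a_n\in L$ and states $q_0=p_0,p_1,\dots,p_n$ of $\A$ with $p_n$ accepting such that for each $i$ the ``segment'' automaton $\A_{p_{i-1}\to p_i}$ (same transition table as $\A$, initial state $p_{i-1}$, unique accepting state $p_i$) accepts some word of $L_{a_i}$, i.e. $L(\A_{p_{i-1}\to p_i})\cap L_{a_i}\ne\es$.

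The key step is to build, by a log-space computation, a ``macro-automaton'' $\A'$ over the alphabet $A$ that has the same state set $Q_\A$ and a transition $p\xrightarrow{a}p'$ precisely when $L(\A_{p\to p'})\cap L_a\ne\es$; then $L(\A')\cap L\ne\es$ iff $L(\A)\cap\sigma(L)\ne\es$, and feeding $\A'$ into the $\nreg(L)$ oracle finishes the argument. The subtlety is that whether to put the edge $p\xrightarrow{a}p'$ into $\A'$ is itself an instance of $\nreg(L_a)$ on the input $\A_{p\to p'}$, which is in $\NL$ but not obviously in deterministic log space. I would handle this with a careful resource bookkeeping: $\NL\subseteq\PP$ (indeed $\NL$ is inside any reasonable class closed under the operations used here), and more importantly the whole construction is a log-space disjunctive truth-table style reduction — $\A'$ is log-space constructible from $\A$ together with the $|Q_\A|^2\cdot|A|$ bits ``$L(\A_{p\to p'})\cap L_a\ne\es$'', each bit computable in $\NL$, and then one further $\nreg(L)$ query. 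Since $\NL$ is closed under log-space Turing reductions (being closed under complement and under log-space many-one reductions, Immerman–Szelepcsényi), the composite decision procedure runs in $\NL$, so $\nreg(\sigma(L))\in\NL$ and $\sigma(L)$ is easy.

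The main obstacle is precisely making this ``$\NL$-subroutine inside an $\NL$-computation'' rigorous: one cannot literally write down $\A'$ in log space if its edge relation requires solving $\NL$ problems. The clean way around it, which I expect the authors use, is to observe that the final reachability check for $L(\A')\cap L\ne\es$ is guessed nondeterministically along with, on the fly, the witness segments showing each traversed macro-edge is legitimate: a single $\NL$ machine guesses the path $p_0,p_1,\dots,p_n$ through $Q_\A$ driving the $\nreg(L)$ nondeterministic procedure, and for each consumed macro-letter $a_i$ it simultaneously runs the $\NL$ procedure for $\nreg(L_a)$ on the implicitly-described segment automaton, reusing space across segments. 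All pieces use only logarithmically many bits of state (a constant number of pointers into $Q_\A$, $Q_L$, $Q_{L_a}$), so the interleaved machine is in $\NSPACE(\log n)$. One detail to check is the empty-word case: $\eps\in\sigma(L)$ iff $\eps\in L$ or some $a_i$ can be mapped to $\eps$ along a word of $L$, which is absorbed into the same guessing framework by allowing macro-edges of the form $p\xrightarrow{a}p$ with $\eps\in L_a$ and the base case $\eps\in L$ together with $q_0$ accepting.
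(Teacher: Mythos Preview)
Your proposal is correct and follows essentially the same approach as the paper: define the macro-automaton $\A'$ on $Q_\A$ with edges $p\xrightarrow{a}p'$ iff $L(\A_{p\to p'})\cap L_a\ne\es$, reduce $\nreg(\sigma(L))$ to $\nreg(L)$ on $\A'$, and compute the needed transition relation on the fly using the $\NL$ algorithms for $\nreg(L_a)$. The paper's proof is much terser about the resource bookkeeping you carefully spell out (it simply asserts ``it is clear that the resulting algorithm is in $\NL$''), but the idea is identical.
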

\begin{proof}
  Let $\A$ be an input for the problem $\nreg(\sigma(L))$. Define the
  automaton $\A^\prime$ over the alphabet $A$ with the state set
  $Q_{\A^\prime} = Q_\A$.  There is a transition $q\xrightarrow{a}q'$
  in the automaton $\A^\prime$ iff there exists a word $w\in L_a$ such
  that $q\xrightarrow{w}q'$ in automaton~$\A$.

  It is clear from the definition that $L(\A)\cap \sigma(L)\ne\es$ iff
  $L(\A^\prime)\cap L\ne\es$. To apply an $\NL$-algorithm for
  $\nreg(L)$ one needs the transition relation of $\A^\prime$. The transition relation is not a part of the
  input now. But it can be computed by $\NL$-algorithms for
  $\nreg(L_a)$. It is clear that the resulting algorithm is in $\NL$.
\end{proof}

Applying Lemma~\ref{substitution}, Lemma~\ref{Seasy} and
Corollary~\ref{D1easy}, we deduce with the theorem.

\begin{theorem}
  Greibach languages are easy.
\end{theorem}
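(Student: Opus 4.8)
The plan is to assemble the theorem directly from the three ingredients just established. Recall that the class of Greibach languages is defined as the substitution closure of the rational cone $\T(D_1 \cup S)$ generated by the Dyck language $D_1$ and the symmetric language $S$. So every Greibach language arises from finitely many applications of two operations: (i) taking a language in the rational cone generated by $D_1$ and $S$, and (ii) substituting languages already obtained into such a language. I would therefore argue by structural induction on the number of substitution steps.

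For the base case, I need to know that every language in $\T(D_1 \cup S)$ is easy. By Lemma~\ref{transductions-vs-logspace-reductions}, it suffices to exhibit an easy generator of this cone, or rather to show that $\nreg(D_1 \cup S) \in \NL$ and invoke the remark after that lemma that easiness is inherited downward along $\lerat$ within a rational cone. Here I would note that $D_1 \cup S$ is rationally equivalent to (or at least dominated by a language easily seen to be rationally equivalent to) a language whose $\nreg$ problem reduces to the two problems $\nreg(D_1)$ and $\nreg(S)$: an input automaton $\A$ has $L(\A) \cap (D_1 \cup S) \neq \es$ iff $L(\A) \cap D_1 \neq \es$ or $L(\A) \cap S \neq \es$, and running the $\NL$-algorithms of Lemma~\ref{Seasy} and Corollary~\ref{D1easy} in turn (on disjoint sub-alphabets after a trivial renaming, so the cone is genuinely principal) gives an $\NL$ algorithm. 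Since $\NL$ is closed under union, $\nreg(D_1 \cup S) \in \NL$, so any filter in $\T(D_1 \cup S)$ is easy.

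For the inductive step, suppose $L$ and all $L_a$ for $a \in A$ are easy Greibach languages; then by Lemma~\ref{substitution} the substitution $\sigma(L)$ is again easy. This is exactly the closure property needed: the substitution closure is generated from the base cone by iterating substitution, and each such iteration preserves easiness by Lemma~\ref{substitution}. Hence by induction every Greibach language is easy, which is the statement.

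The only genuinely delicate point is the base case: one must be slightly careful that $D_1 \cup S$ (or whatever concrete generator one picks for the cone $\T(D_1, S)$) is a \emph{principal} cone with an easy generator, rather than just a finitely generated one, since Lemma~\ref{transductions-vs-logspace-reductions} is stated for a single filter $F$. The clean way around this is to work directly with $\nreg(D_1 \cup S)$ over a disjoint union of alphabets and verify by hand, as sketched above, that it lies in $\NL$ using the two lemmas plus closure of $\NL$ under union; I expect this to be routine but it is the step that needs stating explicitly. Everything else is bookkeeping on the definition of the substitution closure.
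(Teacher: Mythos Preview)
Your proposal is correct and follows exactly the paper's approach: the paper's proof is the single sentence ``Applying Lemma~\ref{substitution}, Lemma~\ref{Seasy} and Corollary~\ref{D1easy}, we deduce the theorem,'' and your structural induction on substitution steps, with base case supplied by the two \NL{} results and inductive step by Lemma~\ref{substitution}, is precisely this argument spelled out.

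One remark: your worry about principality of the base cone is unnecessary. By the paper's definition a rational cone is a class closed only under rational dominance, so the smallest rational cone containing both $D_1$ and $S$ is simply $\T(D_1)\cup\T(S)$; every language in it is dominated by $D_1$ or by $S$, and Lemma~\ref{transductions-vs-logspace-reductions} immediately gives easiness from Corollary~\ref{D1easy} or Lemma~\ref{Seasy} without any detour through $D_1\cup S$ or disjoint alphabets.
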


\section{The case of polynomially-bounded rational index}\label{index}

We do not know whether there exists a CFL that is neither hard nor easy. 
In this section we indicate one possible class of candidates for an
intermediate complexity: the languages 
with polynomially-bounded rational indices.

Rational index appears to be a very useful
characteristic of a context-free language because 
rational index does not increase significantly under
rational transductions.

\begin{known}[Boasson, Courcelle, Nivat, 1981, \cite{BCN}]
	If $L^\prime \lerat L$ then there exists a constant $c$ such that
	$ \rho_{L^\prime}(n) \leq cn(\rho_{L}(cn)+1)$. 
\end{known} 

Thus the rational index can be used to separate languages
w.r.t. the rational dominance relation.
 Note  that the
rational index of a generator of the  $\CFL$ cone  has rather good estimations.

\begin{known}[Pierre, 1992, \cite{P92}]
	The rational index of any generator of the rational cone of $\CFL$
   belongs to $\exp(\Theta(n^2/\log n))$. 
 \end{known}

The examples of easy languages in Section~\ref{easy} 
have polynomially-bounded rational indices. 
Moreover, context-free languages with rational index $\Theta(n^\gamma)$ for any positive algebraic number
$\gamma>1 $ were presented in~\cite{PF90}. 
All of them are easy. The proof is rather technical and is skipped here. 
Thus it is quite natural to suggest that any
language with polynomially-bounded rational index is easy. 

Unfortunately we are able to give only a weaker bound on the algorithmic
complexity in the case of polynomially-bounded rational index.

\begin{theorem}\label{PolynomialRatIntdex}
  For a context-free filter $F$
with polynomially-bounded rational index,
  the problem $\nreg(F)$ lies in $\NSPACE(\log^2 n)$.
\end{theorem}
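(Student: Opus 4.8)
The plan is to exploit the hypothesis in the form: $\rho_F(n)\le p(n)$ for some fixed polynomial $p$, which we may assume the algorithm knows, since the machine is allowed to depend on $F$. If $L(\A)\cap F\ne\es$ for an input NFA $\A$ with $n$ states, then by the very definition~\eqref{RatIndDef} of the rational index there is a witness word $w\in L(\A)\cap F$ with $|w|\le\rho_F(n)\le p(n)$. So it suffices to decide, given $\A$, whether $L(\A)\cap F$ contains a word of length at most $p(n)$.

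Fix a grammar $G$ for $F$ in Chomsky normal form. By Lemma~\ref{ATrans} (and the remark following it) there is a grammar $G'$ for $L(\A)\cap F$ with $O(n^2)$ nonterminals; although $G'$ is too large to write down within our space budget, any single candidate rule of $G'$ can be recognized in deterministic logarithmic space, since it is determined by one rule of the fixed grammar $G$ together with $O(1)$ transitions of $\A$. Thus the task becomes: decide whether the start symbol $S'$ of $G'$ admits a derivation tree with at most $p(n)$ leaves (the case $\eps\in L(\A)\cap F$ is trivial and checked separately in $\NL$). Crucially, $O(n^2)$, $p(n)$, and hence the number of bits needed to name a nonterminal of $G'$ or a length parameter $\le p(n)$, are all $O(\log n)$.

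The core is a divide-and-conquer recursion of Lewis--Stearns--Hartmanis type for context-free recognition, made nondeterministic because we are searching for a witness rather than checking a given one. Guess $k\le p(n)$ and verify the predicate $\mathrm{Gen}(X,k)$ ``$X$ derives a terminal word of length exactly $k$'' for $X=S'$. To verify $\mathrm{Gen}(X,k)$ with $k\ge2$, use that a derivation tree with $k$ leaves in a Chomsky normal form grammar contains a node $v$ whose subtree has $k_1$ leaves with $k/3<k_1\le 2k/3$: guess the label $Y$ of $v$ and guess $k_1$, and recursively verify $\mathrm{Gen}(Y,k_1)$ together with an auxiliary predicate $\mathrm{Ctx}(X,Y,k-k_1)$ ``$X$ derives $uYv$ with $|uv|=k-k_1$''. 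The predicate $\mathrm{Ctx}$ is verified analogously by looking at the path from $X$ down to the hole $Y$: either this path carries a node at which roughly half of the surrounding terminals lie below it, and we split the context there into two $\mathrm{Ctx}$-calls; or a single pendant subtree hanging off the path carries more than a third of the surrounding terminals, and we peel it off with a $\mathrm{Gen}$-call, leaving two strictly smaller $\mathrm{Ctx}$-calls. The base cases are $\mathrm{Gen}(X,1)$ (there is a rule $X\to a$) and $\mathrm{Ctx}(X,Y,0)$ (which holds iff $X=Y$, since Chomsky normal form has no unit or $\eps$-productions below the axiom). In every branch each recursive parameter drops below $2/3$ of the current one within at most two recursion steps, so the recursion depth is $O(\log p(n))=O(\log n)$; each stack frame stores only $O(\log n)$ bits (a bounded number of nonterminal names and length parameters, plus the current guesses and bookkeeping), and nondeterministic guesses made inside a frame are discarded when the frame returns. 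Hence the whole computation runs in nondeterministic space $O(\log^2 n)$.

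The only place requiring genuine care is the recursion for $\mathrm{Ctx}$: one must set up the ``one hole, possibly one big pendant subtree'' decomposition so that no auxiliary predicate with an unbounded number of holes is ever needed (so a frame names $O(1)$ nonterminals) while still guaranteeing the geometric decrease of the size parameter along every branch of the recursion. Once that is in place, the remaining points are routine: verifying that the guessed pieces glue back into a bona fide derivation in $G'$ of a word of length $\le p(n)$, and the log-space recognizability of the rules of $G'$ recorded above, which is exactly what keeps the arithmetic and rule lookups inside $O(\log n)$ space per frame.
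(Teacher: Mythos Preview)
Your proposal is correct and follows the same Lewis--Stearns--Hartmanis divide-and-conquer idea as the paper: use the polynomial bound on $\rho_F$ to cap the witness length at $p(n)$, work implicitly in the product grammar $G'$ of Lemma~\ref{ATrans} (whose rules are log-space recognizable), and recurse on a ``middle'' node so that the length parameter drops by a constant factor per level, giving $O(\log n)$ depth and $O(\log n)$ bits per frame.

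The organizational difference is in how the context around the middle node is processed. The paper keeps a single predicate (``$[qA'p]$ derives some word'') and handles the context by an \emph{iterative} walk from the guessed middle node $A_1$ up to $A$, recursively verifying each pendant sibling $C$ along the way; since $|u_1|\ge |w|/3$, every sibling has yield $\le 2|w|/3$, so recursion depth still drops geometrically, while only the current $A_i$ of the walk need be stored. You instead introduce an explicit second predicate $\mathrm{Ctx}(X,Y,m)$ with its own balanced recursion (split at a balanced node on the $X$--$Y$ path, or peel off a heavy pendant into a $\mathrm{Gen}$ call plus two small $\mathrm{Ctx}$ calls). Your version buys you explicit length parameters, which cleanly enforce the $O(\log n)$ depth bound on \emph{every} computation path (the paper is tacit on this and would strictly need a depth counter); the paper's version buys simplicity by avoiding the second predicate and its case analysis. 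Either way the space bound is $O(\log^2 n)$.
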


We use a technique quite similar to the technique from
\cite{MemBounds}. First we need an auxiliary result.

\begin{Lemma}[\cite{MemBounds}]
  For a grammar $G$ in the Chomsky normal form and for an arbitrary string
  $w = xyz$ from $L(G)$ of length $n$ there is a nonterminal $A$ in
  the derivation tree, such that $A$ derives $y$ and $n/3\leq|y|\leq 2n/3$.
\end{Lemma}

Let us return to the proof of the theorem.

\begin{proof}[of Theorem~\ref{PolynomialRatIntdex}]
  Consider a grammar $G^\prime$ in the Chomsky normal form such that
  $L(G^\prime) = F$. Fix an automaton $\A$ with $n$ states such that
  the  minimal length of $w$ from $L(\A)\cap F$ equals
  $\rho_F(n)$. 
  The length of the word $w$ is polynomial in $n$. Consider
  the grammar $G$ such that $L(G) = L(\A)\cap F$ 
obtained from the grammar
  $G^\prime$ by the construction from Lemma~\ref{ATrans}.

  The algorithm does not construct the grammar $G$ itself, 
  since such a construction 
  expands the size of grammar $G^\prime$ up to $n^3$
  times. Instead, the algorithm
  nondeterministically guesses the derivation tree of the word $w$ in the grammar
  $G$, if it exists.
  Informally speaking, it restores the derivation tree
  starting from its `central' branch.

  The main part of the algorithm is a recursive procedure that checks
  correctness for a nonterminal $A = [qA^\prime p]$ of the
  grammar~$G$.  We say that the nonterminal $A = [qA^\prime p]$ is
  correct if $A$ produces a word $w$ in the grammar~$G$.

  If a nonterminal is  $[q\sigma p]$, where $\sigma$ is a terminal
  then the procedure
  should check that  $q\xrightarrow{\sigma}p$ in
  the automaton~$\A$. 

  In a general case the procedure of checking correctness 
  nondeterministically guesses a 
  nonterminal $A_1=[\ell_1A_1'r_1]$ such 
  that $w=p_1u_1s_1$, and  $A_1$
  derives the word $u_1$ and $ 1/3|w| \leq |u_1| \leq 2/3|w| $.
  Then it is recursively applied to the nonterminal $A_1$.
  If successful the procedure sets $i:=1$ 
  and repeats the following steps:
  \begin{enumerate}
  \item Nondeterministically guess the ancestor $A_{i+1} =
  [\ell_{i+1}A_{i+1}r_{i+1}]$ of $A_i$ in 
  the derivation tree. There are two possible cases:
  \begin{itemize}
  \item [(i)] either $A_{i+1}\to [q'C'\ell_{i+1}]A_i$ in the grammar $G$ 
    (set up $C:=[q'C'\ell_{i+1}] $)
  \item [(ii)] or $A_{i+1}\to A_i[r_{i+1}C'p']$
    (set up $C:=[r_{i+1}C'p'] $).
  \end{itemize}
  \item Recursively apply the procedure of checking correctness to the nonterminal
  $C$. 
  \item If successful set up $i:=i+1$.
  \end{enumerate}

  Repetitions are finished and the procedure returns success if $A_j = A$. 
  If any call of the procedure of checking correctness returns failure then the whole
  procedure returns failure.

  In recursive calls the lengths of words to be
  checked diminish by a factor at most $2/3$. So the total number of
  recursive calls is $O(\log n)$, where $n$ is the input length. 
  Data to be stored during the process form a  list of triples
  (an automaton state, a nonterminal of the grammar $G'$,  a automaton
  state). Each automaton state description requires $O(\log n)$ space
  and nonterminal description requires a constant size space since 
  grammar $G^\prime$ is fixed. Thus
  the total space for the algorithm is $O(\log^2 n)$.	
\end{proof}

\subsection*{Acknowledgments}

We are acknowledged to Abuzer Yakaryilmaz for pointing on the result
of Lemma~\ref{countertheorem} and for reference to a lemma similar to
Lemma~\ref{counterlemma}.


\begin{thebibliography}{20}
\bibitem{ALRSS09} 
Anderson, T., Loftus, J., Rampersad, N., Santean, N., Shallit, J.:
Detecting palindromes, patterns and borders in regular languages.
Information and Computation 207, 1096--1118 (2009)

\bibitem{Be09} Berstel, J.: Transductions and context-free
  languages. Teubner Verlag, Stuttgart / Leipzig / Wiesbaden (1979)

\bibitem{BeBoa90}
Berstel, J., Boasson, L.: Context-Free Languages. In:  Leeuwen, van J. 
(ed.) Handbook of Theoretical Computer Science, Vol. B, pp.~59--102.
Elsevier,  Amsterdam (1990)


\bibitem{Boa85}  Boasson, L.: Non-g\'en\'erateurs alg\'ebriques et
substitution. RAIRO Informatique th\'eorique 19, 125--136  (1985)

\bibitem{BCN}  Boasson, L., Courcelle, B.,  Nivat, M.:
The rational index, a complexity measure for languages.
SIAM J. Comput. 10(2), 284--296 (1981)

\bibitem{GHR} Greenlaw, R., Hoover, H. J., Ruzzo, L.: Limits to
  Parallel Computation: P-completeness Theory. Oxford
  Univ. Press, Oxford (1995)

\bibitem{Gre} Greibach, Sh.A.: An infinite hierarchy of context-free
  languages. J. of the ACM 16, 91--106 (1969)

\bibitem{MemBounds} Lewis, P.M., Stearns, R.E., Hartmanis, J.:  Memory
bounds for recognition of context-free and context-sensitive
languages.  In: Switching Circuit Theory and Logical Design,
pp. 191-202. IEEE, New York (1965)

\bibitem{P92}  Pierre L.: Rational indexes of generators of the cone of
   context-free languages. Theoretical Computer Science 95, 279--305 (1992)

\bibitem{PF90} Pierre, L.,  Farinone, J.M.: Rational index of
  Context-free languages with rational index in $\Theta(n^\gamma)$ for
  algebraic numbers~$\gamma$. Informatique th\'eorique et
  applications 24(3),  275--322 (1990)

\bibitem{Vya11} Vyalyi M.N.: On regular realizability problems.
Problems of Information Transmission 47(4),
342--352 (2011)

\bibitem{Vya13} Vyalyi M.N.:
Universality of regular realizability problems. In: Bulatov, A.A., Shur,
A.M. (eds) CSR 2013. LNCS, vol.~7913, pp.~271--282 Springer, Heidelberg (2013)

\bibitem{Abuzer12}
Yakaryilmaz, A.:
One-counter verifiers for decidable languages. 
In: Bulatov, A.A., Shur,
A.M. (eds) CSR 2013. LNCS, vol.~7913, pp.~366--377 Springer, Heidelberg (2013)
\end{thebibliography}
\end{document}